\def\F{\mathbb F}
\def\Fq{\mathbb{F}_q}
\def\PG{{\rm{PG}}}
\def\cC{\mathcal{C}}
\DeclareMathOperator\rk{\mathrm{rk}}
\DeclareMathOperator\Aut{\mathrm{Aut}}
\DeclareMathOperator\pgammal{\mathrm{P}\Gamma{}\mathrm{L}}
\def\GL{{\rm{GL}}}
\theoremstyle{plain} 
\newtheorem{thm}{Theorem}[section] 
\newtheorem{lem}[thm]{Lemma}
\theoremstyle{defn}
\newtheorem{defn}[thm]{Definition}
\newtheorem{corollary}[thm]{Corollary} 
\theoremstyle{boldremark}
\newtheorem{remark}[thm]{Remark}
\title{Non-linear MRD codes\\
from cones over exterior sets}
\date{}
\author{Nicola Durante \,\,\,
Giovanni Giuseppe Grimaldi \,\,\,
Giovanni Longobardi}
\begin{document}
\maketitle

\begin{abstract}
\noindent By using the notion of a $d$-embedding $\Gamma$ of a (canonical) subgeometry $\Sigma$ and of exterior sets with respect to the $h$-secant variety $\Omega_{h}(\mathcal{A})$ of a subset $\mathcal{A}$, $ 0 \leq h \leq n-1$, in the finite projective space $\PG(n-1,q^n)$, $n \geq 3$, in this article we construct a class of non-linear $(n,n,q;d)$-MRD codes for any $ 2 \leq d \leq n-1$. A code of this class $\cC_{\sigma,T}$, where $1\in T \subseteq \F_q^*$ and $\sigma$ is a generator of $\mathrm{Gal}(\F_{q^n}|\F_q)$, arises from a cone of $\PG(n-1,q^n)$ with vertex an $(n-d-2)$-dimensional subspace over a maximum exterior set $\mathcal{E}$ with respect to $\Omega_{d-2}(\Gamma)$. We prove that the codes introduced in \cite{Coss,DonDur,Sic} are suitable  punctured ones of $\cC_{\sigma,T}$ and we solve completely the inequivalence issue for this class showing that $\cC_{\sigma,T}$ is neither equivalent nor adjointly equivalent to the non-linear MRD codes $\cC_{n,k,\sigma,I}$, $I \subseteq \F_q$, obtained in \cite{Otal}.

\end{abstract}
\thanks{2020 MSC: 51E99, 05B25 }\\
\thanks{{\em Keywords}: linear set, rank distance code, linearized polynomial, finite field}

\section{Introduction}

From the 1970s to the present day, rank distance codes have been deeply studied as useful tool to correct errors and erasures in networks, and for the increasing applications which they have in
storage systems \cite{Rawat}, cryptosystems \cite{Crypto}, space-time codes \cite{spacetime}
and random linear network coding \cite{Silva}.
Moreover, they are linked to many structures studied in finite geometry such as semifields, linear sets, splitting dimensional dual hyperovals and $q$-polymatroids; for more details see \cite{Gorla, PolZu, Shee2} and the references
therein.
Delsarte \cite{Delsarte} and Gabidulin \cite{Gab0} introduced, independently, rank distance codes as sets of matrices over a finite field equipped with the distance $d(A,B)=\rk(A-B)$. 
The most studied ones are codes with the largest possible size. These are called \textit{maximum rank distance codes}, \textit{MRD codes} for short.\\
Rank distance codes that are very rich in structure, for instance subgroups or subspaces of matrices, are the most studied in literature. While only few classes of non-additive and non-linear are known for almost any choices of parameters.\\
In this paper, we will give a geometric construction of a new family of these objects. These codes will be obtained from a cone of a projective space whose vertex is a proper subspace and whose base is a special pointset skew with the vertex. The base of such a cone is linked to the non-linear MRD codes constructed by Cossidente {\em et al.} \cite{Coss}, Durante and Siciliano \cite{Sic} and Donati and Durante \cite{DonDur}. These latter can be obtained in turn by puncturing codes in the relevant family. Finally, we will show that this class of codes is not equivalent to that one constructed by Otal and {\"O}zbudak in \cite{Otal}.\\
The article is structured as follows: in Section \ref{sec2}, some basic notions on rank distance codes will be recalled and the state of the art of the known MRD codes will be traced. In Section \ref{sec3} and \ref{Cfset}, the geometric setting will be introduced and the codes exhibited in \cite{DonDur} by Donati and Durante recalled. In Section \ref{sec5}, a geometrical configuration about cones over some particular subsets of the finite projective space will be investigated and this will give rise to our  class of non-linear codes exhibited in Section \ref{sec6}. Finally in Section \ref{sec7}, we will show that this class is effectively new,  it contains a class of linear MRD codes, and, in certain sense, a class of non-linear MRD ones already known.

\section{Preliminary results and notions}\label{sec2}

Let $\Fq$ be the finite field of $q$ elements, $q$ a prime power, and let $\mathbb{F}_q^{m \times n}$ be the set of $m \times n$ matrices with entries over $\Fq$ endowed with \textit{rank distance} 
$$d: (A,B) \in \mathbb{F}_q^{m \times n} \times \mathbb{F}_q^{m \times n} \longrightarrow \rk(A-B) \in \{0,1,\ldots, \min\{m,n\}\}.$$
A subset of $\mathbb{F}_q^{m \times n}$, including at least two elements is called a {\em rank
distance code}. The {\em minimum distance} $d(\mathcal{C})$ of a code $\mathcal{C}$ is naturally defined by 
\begin{equation*}
    d(\mathcal{C}) =
\min\{d(A, B) : A, B \in \mathcal{C}, A \ne B\}.
\end{equation*}
If $d=d(\mathcal{C})$, we will say that $\mathcal{C}$ is an $(m, n, q;d)$-rank distance code. An $(m, n, q;d)$-rank distance code is {\em additive} if it is an additive subgroup of $\mathbb{F}_q^{m \times n}$. An additive code  is $\F_q$-{\em linear} if it is a subspace of $\mathbb{F}_q^{m \times n}$ seen as a vector space over $\mathbb{F}_q$. A code that it is not an $\F_q$-subspace is called a \textit{non-linear} code.

It is well-known that the size of an $(m, n, q; d)$-rank distance code $\mathcal{C}$ satisfies the
\textit{Singleton-like bound}, see e.g. \cite[Theorem 5.4]{Delsarte}:
$$
|\mathcal{C}|\leq q^{\max\{m,n\}(\min\{m,n\}-d + 1)}.
$$
When this bound is achieved, $\mathcal{C}$ is called an $(m, n, q; d)$-\textit{maximum
rank distance} code, or in short $(m, n, q; d)$-\textit{MRD} code.
The \textit{adjoint code} of $\mathcal{C}$ is defined as the code made up by the transpose matrices belonging to $\mathcal{C}$:
\begin{equation*}
    \mathcal{C}^t=\{C^t: C \in \mathcal{C}\}
\end{equation*}
where the superscript $t$ stands for the matrix transposition.
By the classification of rank isometries of $\mathbb{F}^{m \times n}_q$
\cite[Theorem 3.5]{Wan}, two rank distance codes $\mathcal{C}, \mathcal{C}^{\prime} \subseteq \mathbb{F}_q^{m \times n}$, $m,n \geq 2$, are called {\em equivalent} if there exist $P\in \GL(m, q)$, $Q \in \GL(n,q)$, $R \in \mathbb{F}_q^{m \times n}$ and a field automorphism $\rho \in \Aut(\mathbb{F}_q)$ such that
$$\mathcal{C}^{\prime}=P\mathcal{C}^{\rho}Q+R=\left\{ PC^{\rho}Q+R: C \in \mathcal{C} \right\}.$$
When $m=n$, in addition to being equivalent, two codes are said \textit{adjointly equivalent} if
$$\mathcal{C}^{\prime}=P(\mathcal{C}^t)^{\rho}Q+R=\left\{ P(C^t)^{\rho}Q+R: C \in \mathcal{C} \right\}.$$

 If both $\mathcal{C}$ and $\mathcal{C}^{\prime}$ are additive, then  we may assume that $R$ is the zero matrix. 
The code $\mathcal{C}^{[u]} \subseteq \mathbb{F}^{(m-u) \times n}_q$ obtained from $\mathcal{C} \subseteq \F_q^{m \times n}$ by  deleting the last $u$ rows, $ 1 \leq u \leq m-1$, is called a \textit{punctured code} of $\mathcal{C}$ . In  \cite[Corollary 7.3]{coverradius}, it is showed that if $\mathcal{C} \subseteq \F_{q}^{m \times n}$, $m \leq n$, is an MRD code then $\mathcal{C}^{[u]}$ is MRD as well. \\
The {\em left} and {\em right idealisers} of a code $\cC \subseteq \Fq^{m \times n}$ are defined as the sets
\begin{equation*}
I_L(\cC)= \left\{ P \in \Fq^{m \times m} : PC \in \cC, \forall C \in \cC \right\}
\end{equation*}
and
\begin{equation*}I_R(\cC)=\left\{ Q \in \Fq^{n \times n} : CQ \in \cC, \forall C \in \cC \right\},
\end{equation*}
respectively.\\
Although the rank distance codes are subsets of matrices, they can be represented in a particular setting, that of $\sigma$-linearized polynomials.
From now on, suppose $m=n$ and let $\sigma : x \in \F_{q^n} \longrightarrow x^{q^s} \in \F_{q^n}$ be a field automorphism of $\F_{q^n}$ with $\gcd(s,n)=1$.

A $\sigma$-\textit{linearized polynomial with coefficients over} $\mathbb{F}_{q^n}$ is a polynomial of the form $\alpha=\sum_{i=0}^\ell \alpha_i X^{\sigma^i} \in \mathbb{F}_{q^n}[X]$, $\ell \in \mathbb{N}$. If $\alpha_\ell \ne 0$, the integer $\ell$ is called the {\em $\sigma$-degree} of $\alpha$ and it will be denoted $\deg_{\sigma}\alpha$ or $\deg_{q^s}\alpha$. The set $\mathcal{L}_{n,q,\sigma}$ of $\sigma$-linearized polynomials with coefficients over $\F_{q^n}$ equipped with the usual sum, the scalar multiplication by an element of $\F_{q^n}$ and the map composition is an algebra over $\F_{q^n}$. While, the set
\begin{equation}
\tilde{\mathcal{L}}_{n,q,\sigma}= \Biggl \{\sum_{i=0}^{n-1} \alpha_i X^{\sigma^i} : \alpha_0,\ldots,\alpha_{n-1} \in \mathbb{F}_{q^n} \Biggr\}
\end{equation}
endowed with the sum as in $\mathcal{L}_{n,q,\sigma}$, the multiplication by an element of $\F_{q}$ and the composition  modulo $X^{q^{ns}}-X$ is an algebra over $\F_q$ isomorphic to the algebra of the endomorphisms of $\mathbb{F}_{q^n}$ seen as vector space over $\mathbb{F}_q$. Indeed, it is well known that any $\F_q$-linear endomorphism of $\F_{q^n}$ can be represented uniquely as a $\sigma$-polynomial belonging to $\tilde{\mathcal{L}}_{n,q,\sigma}$, \cite[Chapter 3 ]{Lidl}.
Then any rank distance code $\mathcal{C}$, $|\mathcal{C}| \geq 2$, can be seen as a suitable subset of $\tilde{\mathcal{L}}_{n,q,\sigma}$.

So, the definitions of transpose matrix, adjoint code and equivalence between codes can be naturally reformulated in this environment. Let $\alpha=\sum_{i=0}^{n-1} \alpha_i X^{\sigma^i} \in   \tilde{\mathcal{L}}_{n,q,\sigma}$, then the \textit{adjoint polynomial} of $\alpha$ is defined as
\begin{equation*}
    \hat{\alpha}=\sum_{i=0}^{n-1}\alpha^{\sigma^{n-i}}_iX^{\sigma^{n-i}}
\end{equation*} and if $\mathcal{C} \subseteq \tilde{\mathcal{L}}_{n,q,\sigma} $ is a rank distance code, $\mathcal{C}^t= \{ \hat{\alpha} : \alpha \in \mathcal{C}\}$. 
Moreover, two rank codes $\mathcal{C}$ and $\mathcal{C}^\prime$ are equivalent or adjointly equivalent if there
exists $(f,\rho,g,h)$ such that $f,g,h \in \tilde{\mathcal{L}}_{n,q,\sigma}$, with $f$ and $g$ permutation polynomials, and $\rho \in \Aut(\mathbb{F}_q)$ such that
\begin{center}
$\mathcal{C}^\prime = \{f \circ \alpha^\rho \circ g+h \colon  \alpha\in \mathcal{C}\} $ \,\,or\,\, $\mathcal{C}^\prime = \{f \circ \alpha^\rho \circ g+h \colon  \alpha\in \mathcal{C}^t\},$
\end{center}
respectively. Here the automophism $\rho$ acts only over the coefficients of a polynomial $\alpha$ in $\cC$ or $\cC^t$, respectively.
A map 
$$(f,\rho,g,h): \alpha \in \mathcal{C} \longrightarrow f \circ \alpha^\rho \circ g + h \in \mathcal{C}^\prime$$
is called an \textit{equivalence} between $\mathcal{C}$ and $\mathcal{C}^\prime$. If $\mathcal{C}=\mathcal{C}^\prime$, an equivalence is said an \textit{automorphism} of $\mathcal{C}$. The set $\Aut(\mathcal{C})$ of all automorphisms of $\mathcal{C}$ is a group with respect to the product 
\begin{equation*}
    (f,\rho,g,h) \star (f',\rho',g',h')=(f' \circ f^{\rho'},\rho\rho',g^{\rho'} \circ g', f' \circ h^{\rho'} \circ g' + h')
\end{equation*}
where the composition is taken modulo $X^{q^{ns}}-X$.\\

\noindent To make this article as self-contained as possible, we shall present the state of the art regarding the known $(n,n,q;d)$-MRD codes in the linearized polynomials setting. Although, very recently, new constructions of both linear and non-linear rectangular MRD codes have been exhibited by Minjia Shi \textit{et al.}  in \cite{switching}.  We shall start from maximum rank distance code that are  $\Fq$-linear subspaces of $\tilde{\mathcal{L}}_{n,q,\sigma}$.\\

\noindent \textit{2.1 Linear MRD codes}\\

\noindent The first class of linear MRD codes, discovered by Delsarte \cite{Delsarte} and Gabidulin \cite{Gab0}, are known in literature as {\em Delsarte-Gabidulin codes}. Later in \cite{Gab}, Gabidulin and Kshevetskiy  provided a generalization of them, called {\em generalized Gabidulin codes} that we shall recap in our notations:
 let $ 1 \leq k \leq n$ be an integers, then a generalized Gabidulin code is the set of linearized polynomials \begin{equation*}
\mathcal{G}_{k,\sigma}=\left\{ \sum_{i=0}^{k-1} \alpha_i X^{\sigma^i} \colon \alpha_0,\ldots,\alpha_{k-1}\in \mathbb{F}_{q^n} \right\}
\end{equation*}
and it is an $(n,n,q;n-k+1)$-MRD code.

 In \cite{Shee}, Sheekey exhibited a wider class of linear MRD codes, called {\em twisted Gabidulin codes} and later generalized in \cite{LuTr} by Lunardon, Trombetti and Zhou. Let $n, k,h$ be positive integers, $1 \leq k<n$ and let $\eta$ be in $\mathbb{F}_{q^n}$ such that $\mathrm{N}_{q^{sn}/q^s}(\eta)\ne (-1)^{nk}$, then a \textit{generalized twisted Gabidulin code} is the set 
\begin{equation*}
\mathcal{H}_{k,\sigma}(\eta,h)=\left\{ \sum_{i=0}^{k-1} \alpha_i X^{\sigma^i} + \eta \alpha_0^{q^h}X^{\sigma^k}: \alpha_0,\ldots,\alpha_{k-1}\in \mathbb{F}_{q^n} \right\}
\end{equation*}
and it is an $(n,n,q;n-k+1)$-MRD code\footnote{The symbol $\mathrm{N}_{q^m/q^\ell}(x) = x^\frac{q^m-1}{q^\ell-1}$ denote the \textit{norm} of $x \in \mathbb{F}_{q^m}$ over $\mathbb{F}_{q^\ell}$, where $m,\ell$ are integers with $\ell \mid m$.}.

Puchinger \textit{et al.} provided a further generalization of generalized twisted Gabidulin codes in \cite{Puc}.\\
Finally, a family of maximum
rank distance codes in $\tilde{\mathcal{L}}_{n,q,\sigma}$, $n=2t$, $1 \leq k < n$ and $q$ odd was discovered by Trombetti and Zhou in 2019 and described  in \cite{trombetti_zhou}: let $\xi \in  \F_{q^n}$ satisfying that $\mathrm{N}_{q^n/q}(\xi )$ is a non-square in $\F_q$, then

\begin{equation*}
\mathcal{D}_{k,\sigma}(\xi )= \left\{
\sum_{i=0}^{k-1}\alpha_i X^{\sigma^i} +\xi \alpha_k X^{\sigma^{k}} : \alpha_0, \alpha_k \in  \F_{q^t}, \alpha_1,\ldots,\alpha_{k-1} \in \F_{q^n} \right \}
\end{equation*}
 is an $(n,n,q;n-k+1)$-MRD code.\\
\\
\noindent \textit{2.2 Additive MRD codes}\\

\noindent In \cite{Otal2}, Otal  and {\"O}zbudak  showed a family of non-linear additive MRD codes as a generalization of generalized twisted Gabidulin codes. Let $n, k, u, h$ be positive integers satisfying  $q=q_0^u$ and $1 \leq k<n$. Let $\eta \in \F_{q^n}$ such that $\mathrm{N}_{q^{sn}/q_0^s}(\eta) \ne (-1)^{nku}$. Then the set
\begin{equation}\label{OtalOzbudakadditive}
\mathcal{A}_{k,\sigma,q_0}(\eta,h) = \left\{ \sum_{i=0}^{k-1} \alpha_i X^{\sigma^i} + \eta \alpha_0^{q_0^h}X^{\sigma^k} : \alpha_0,\ldots,\alpha_{k-1} \in \mathbb{F}_{q^n} \right\}
\end{equation}
is an $\F_{q_0}$-linear $(n,n,q;n-k+1)$-MRD code. \\

Later, another family of non-linear additive MRD codes appeared in \cite{Sheekey3}. Let $F \in \mathbb{F}_q[Y]$ be irreducible polynomial of degree $r$ and let consider the ideal $\mathcal{I}=(F(X^{q^{sn}}))$ contained in $\mathcal{L}_{n,q,\sigma}$. Let $\rho \in \Aut(\F_q)$ with $\mathrm{Fix}(\rho)=\F_{q_0}$ and let $\eta \in \F_{q^n}$ such that $\mathrm{N}_{q^n/q_0}(\eta)\mathrm{N}_{q/q_0}((-1)^{kr(n-1)}F_0^k)\neq 1 $. Then,  the set
\begin{equation*}
\mathcal{S}_{k,\sigma}(\eta,\rho)=\left \{ \sum_{i=0}^{k-1}\alpha_iX^{\sigma^i}+ \eta \alpha_{0}^\rho X^{\sigma^k}+ \mathcal{I}: \alpha_0,\ldots,\alpha_{k-1} \in \F_{q^n} \right \} \subseteq 
\mathcal{L}_{n,q,\sigma}/\mathcal{I}\cong \F_{q^r}^{n \times n}
\end{equation*}
 is an $\F_{q_0}$-linear MRD code of $\F_{q^r}^{n \times n}$ of size $q^{nrk}$,  \cite[Theorem 7]{Sheekey3}. Note that for suitable choices of $F$, $\rho$ and $\eta$, the (twisted) generalized Gabidulin codes and the non-linear additive codes in \eqref{OtalOzbudakadditive} return, see \cite[Remark 3]{Sheekey3}.\\
 \\
\noindent \textit{2.3 Non-linear MRD codes}\\

\noindent In finite geometry $(n,n,q;n)$-MRD codes are known as {\em spread sets} (see e.g. \cite{Dem}) and there are examples for both cases linear and non-linear. These are linked to algebraic and geometric structures such as quasifields, semifields and translation planes coordinatized over them, see \cite{delaCruz} and \cite{handbook}.
The first class of non-linear MRD codes whose codewords are not invertible and so different from spread sets, are the $(3,3,q;2)$-MRD codes constructed by Cossidente \textit{et al.} \cite{Coss}. These arise from a geometrical context and have been generalized in two steps: firstly, by Durante and Siciliano \cite{Sic} obtaining a family of $(n,n,q;n-1)$-MRD codes, $n \geq 3$, and then by Donati and Durante obtaining a family of $(d+1,n,q;d)$-MRD codes, $2 \leq d \leq n-1$, see \cite{DonDur}.
As we shall extend this family further, we will describe it  in detail in Section \ref{Cfset}.

The second family of non-additive MRD codes for all $n,d$  has been constructed by Otal and {\"O}zbudak in \cite{Otal}: let $I$ be a subset of $\mathbb{F}_q$, $1 \leq k \leq n-1$ and consider
\begin{equation}\label{Turchi}
\mathcal{C}^{(1)}_{n,k,\sigma,I}=\left\{ \sum_{i=0}^{k-1} \alpha_i X^{\sigma^i}: \alpha_0,\ldots,\alpha_{k-1} \in \mathbb{F}_{q^n}, \mathrm{N}_{{q^n}/ q}(\alpha_0) \in I \right\},
\end{equation}
\[
\mathcal{C}^{(2)}_{n,k,\sigma,I}=\left\{ \sum_{i=1}^{k} \beta_i X^{\sigma^i}  : \beta_1,\ldots,\beta_{k} \in \mathbb{F}_{q^n}, \mathrm{N}_{{q^n}/ q}(\beta_k) \not \in (-1)^{n(k+1)}I \right\}.
\]
Then $\mathcal{C}_{n,k,\sigma,I}=\mathcal{C}^{(1)}_{n,k,\sigma,I} \cup \mathcal{C}^{(2)}_{n,k,\sigma,I} \subseteq \tilde{\mathcal{L}}_{n,q,\sigma}$ is an $(n,n,q;n-k+1)$-MRD code. In \cite[Corollary 2.1]{Otal}, they proved 
\begin{itemize}
\item[1.] if $q = 2$ or $I \in \{\emptyset,\{0\},\Fq^*,\Fq\}$ then $\mathcal{C}_{n,k,\sigma,I}$ is equivalent to a generalized Gabidulin code;
    \item[2.] if $q > 2$ and $I \not  \in \{\emptyset,\{0\},\Fq^*,\Fq\}$ , then $\mathcal{C}_{n,k,\sigma,I}$ is not an affine code (i.e. not a
translated version of an additive code).
\end{itemize}

In the following sections, we provide a geometric construction for a class of non-linear $(n,n,q;d)$-MRD $\mathcal{C}_{\sigma,T}$, $1 \in T \subseteq \mathbb{F}_q^*$,  $2 \leq d \leq n-1$, and puncturing properly a code in this relevant class, we get a code described in \cite{DonDur}. We shall show that this class is effectively new, i.e. any code is not equivalent to a non-linear code constructed by Otal and {\"O}zbudak.

\section{The geometric setting}\label{sec3}

Let $\mathbb{E}=\mathrm{End}_{\F_q}(\F_{q^n})$ be the vector space of all endomorphisms of $\F_{q^n}$ seen as a vector space over the field $\F_q$.  As  any element of  $\mathbb{E}$ corresponds  1-to-1 to a linearized polynomial 
$\alpha=\sum_{i=0}^{n-1}\alpha_iX^{\sigma^i} \in \tilde{\mathcal{L}}_{n,q,\sigma}$, we will define \textit{rank} of the polynomial $\alpha$  as the dimension  over $\F_q$ of $\mathrm{im}\, \alpha(x)$, where $\alpha(x)$ is the map  $x \in \F_{q^n} \mapsto \sum_{i=0}^{n-1}\alpha_i x^{\sigma^i} \in \F_{q^n}$,  in symbol $\rk \alpha = \dim_{\F_q} \mathrm{im} \,\alpha(x)$.
Also, the $\F_q$-linear map $\alpha(x)$ has rank $r$ if and only if the {\em Dickson matrix}
$$
D_\alpha=
\begin{pmatrix} 
\alpha_0 & \alpha_1 & \ldots & \alpha_{n-1} \\
\alpha_{n-1}^\sigma & \alpha_0^\sigma & \ldots & \alpha_{n-2}^\sigma \\
\vdots & \vdots & \vdots & \vdots\\
\alpha_1^{\sigma^{n-1}} & \alpha_{2}^{\sigma^{n-1}} & \ldots & \alpha_0^{\sigma^{n-1}}
\end{pmatrix}
$$
has rank $r$, see for more details \cite[Chapter 3]{Lidl}.

Let $V$ be a $v$-dimensional vector space over the field $\F_{q^n}$ and let $\PG(v - 1, q^n)=\PG(V, \F_{q^n} )$. A set $\Theta$ of points of $\PG(v-1, q^n)$ is an $\Fq$-\textit{linear set} if each point is defined by a non-zero vector of an $\F_q$-linear vector space $U \subset V$, in symbol
\begin{equation*}
    \Theta= \left \{ \langle  \textbf{u} \rangle_{\F_{q^n}}  \colon  \textbf{u} \in U, \textbf{u} \neq  \textbf{0}  \right \}.
\end{equation*}
If $\dim_{\F_q}U=u$, we say that $\Theta$  has \textit{rank} $u$. The size of $\Theta$ can be at most $\frac{q^u-1}{q-1}$ and if it is attained, $\Theta$ is said to be \textit{scattered}. If $u=v$ and $\langle \Theta \rangle=\PG(v-1,q^n)$, then $\Theta$ is a {\em (canonical) subgeometry} of $\PG(v-1,q^n)$. It follows that $\Theta$ is a canonical subgeometry if and only if any of its frame is also a frame of $\PG(v-1,q^n)$. 

Let $\Sigma \cong \PG(n-1,q)$ be a canonical subgeometry of $\PG(n-1,q^n)$ and let $\hat{\sigma}$ be a generator of the subgroup of $\pgammal(n,q^n)$  whose elements fixing pointwise $\Sigma$.
Let $\overline{S}=\PG(W,q^n)$ be a subspace of $\PG(n-1,q^n)$. The integer $\dim_{\F_{q^n}} W=w $ will be called the \textit{rank} of $\overline{S}$. Then $S=\overline{S} \cap \Sigma$ is a subspace of $\Sigma$ of rank at most $w$. We will say that $\overline{S}$ is a {\em subspace of} $\Sigma$ if $S$ and $\overline{S}$ have the same rank. In particular, this holds if and only if $\overline{S}$ is fixed by the collineation $\hat{\sigma}$ (see e.g. \cite{Lun1}).
Any point $P$ of $\PG(n - 1, q^n)$ defines the subspace $$L_{P,\hat{\sigma}} = \langle P, P^{\hat{\sigma}}, \ldots, P^{\hat{\sigma}^{n-1}}
\rangle.$$
If the collineation $\hat{\sigma}$ is clear from the context, we will indicate $L_{P,\hat{\sigma}}$ simply by $L_{P}$.
Note that $L_{P}$ has rank at most $n$ and it is fixed by $\hat{\sigma}$, and so $L_{P}$ is a subspace of $\Sigma$.
Denote by $(X_0,X_1,\ldots,X_{n-1})$ the homogeneous projective coordinates of $\PG(n-1,q^n)$ and let $P(\alpha_0,\alpha_1,\ldots, \alpha_{n-1})$ be a point of $\PG(n-1,q^n)$. It will be of {\em type $r$ with respect to $\Sigma$} if $L_P$ has rank $r$. In particular, if $\hat{\sigma}$ is the collineation of $\PG(n-1,q^n)$ defined by $$(X_0,X_1,\ldots,X_{n-1})^{\hat{\sigma}}=(X_{n-1}^{\sigma},X_0^{\sigma},\ldots,X_{n-2}^{\sigma}).$$
Then, the collineation $\hat{\sigma}$ fixes pointwise the canonical subgeometry
\begin{equation}\label{canonicalsub}
    \Sigma_{n,n}=\{  (x,x^{\sigma},\ldots,x^{\sigma^{n-1}})  : x \in \F^*_{q^n} \} \cong \PG(n-1,q).
\end{equation}
and the point $P(\alpha_0,\alpha_1,\ldots, \alpha_{n-1})$ is of type $r$ with respect to $\Sigma_{n,n}$ if and only $\alpha=\sum^{n-1}_{i=0}\alpha_i X^{\sigma^i}$   has rank $r$, or also the Dickson matrix $D_{\alpha}$ has rank $r$.\\

Let $\PG(\mathbb{F}_q^{m \times n},\F_q)=\PG(mn-1,q)$, $m \leq n$, and let $\mathcal{S}_{m,n}$ be the {\em Segre variety} of $\PG(\mathbb{F}_q^{m \times n},\F_q)$, i.e., $\mathcal{S}_{m,n}$ is the set of all points $\langle A \rangle_{\F_q}$ in $\PG(\mathbb{F}_q^{m \times n},\F_q)$ such that $\rk A=1$, see \cite[Section 4.5]{HirsThas}. 
This can be seen as the $\F_q$-field reduction of the set of points
$$\Sigma_{m,n}=\{(x,x^{\sigma},\ldots,x^{\sigma^{m-1}})  \colon x \in \F^*_{q^n} \} \cong \PG(n-1,q)$$
of $\PG(m-1,q^n)$ in $\PG(mn-1,q)$, see \cite[Section 2.2]{Lav}.\\

Let $\mathcal{A}$ be a subset of $\PG(n-1,q)$ and denote by $\Omega_h(\mathcal{A})$ the \textit{$h$-secant variety} of $\mathcal{A}$, i.e. the union of the $\ell$-dimensional projective subspaces spanned by points of $\mathcal{A}$ for any $0 \leq \ell \leq  h $,  \cite{Har}. Note that $\Omega_0(\mathcal{A})=\mathcal{A}$ and for any $1 \leq h \leq n-1$, $\Omega_{h-1}(\mathcal{A}) \subseteq \Omega_{h}(\mathcal{A})$. Moreover, if $\mathcal{A} \subset \PG(n-1,q)$ such that $\langle \mathcal{A} \rangle =\PG(t-1,q) \subset \PG(n-1,q)$, then $\Omega_{h}(\mathcal{A})=\Omega_{t-1}(\mathcal{A})$ for any $ t \leq h \leq n-1 $. A set of points $\mathcal{E} \subset \PG(n-1,q)$ is called an \textit{exterior set} with respect to $\Omega_h(\mathcal{A})$ if any line joining two points of $\mathcal{E}$ is disjoint from $\Omega_h(\mathcal{A})$. The following theorem proves an upper bound for the size of an exterior set.
\begin{thm}\label{size_ext}
Let $\mathcal{A}\subset \PG(n-1,q)$ such that $\langle \mathcal{A} \rangle = \PG(n-1,q)$. Let $\mathcal{E} \subset \PG(n-1,q)$ be an exterior set with respect to $\Omega_h(\mathcal{A})$, $0 \leq h \leq n-1$. Then
   $$
   |\mathcal{E}|\leq \frac{q^{n-h-1}-1}{q-1}.
   $$
\end{thm}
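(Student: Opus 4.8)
The plan is to realise the right-hand side as the number of points of a projective space $\PG(n-h-2,q)$ into which $\mathcal{E}$ injects, by projecting from a suitable $h$-dimensional subspace contained in $\Omega_h(\mathcal{A})$. Indeed $\frac{q^{n-h-1}-1}{q-1}$ is exactly the number of points of $\PG(n-h-2,q)$, so it suffices to produce an injective map $\mathcal{E}\to\PG(n-h-2,q)$.

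First I would dispose of the trivial cases $|\mathcal{E}|\le 1$, where the inequality is immediate for $0\le h\le n-2$, and assume from now on that $|\mathcal{E}|\ge 2$. Under this assumption the defining property already forces $\mathcal{E}\cap\Omega_h(\mathcal{A})=\emptyset$: if some $P\in\mathcal{E}$ lay on $\Omega_h(\mathcal{A})$, then for any other $Q\in\mathcal{E}$ the line $\langle P,Q\rangle$ would contain the point $P\in\Omega_h(\mathcal{A})$, contradicting that this line is disjoint from $\Omega_h(\mathcal{A})$.

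Next, since $\langle\mathcal{A}\rangle=\PG(n-1,q)$, I can select $h+1$ linearly independent points $A_0,\dots,A_h\in\mathcal{A}$, which is possible because $h+1\le n$, and set $\Pi=\langle A_0,\dots,A_h\rangle$, an $h$-dimensional subspace that, being spanned by points of $\mathcal{A}$, lies inside $\Omega_h(\mathcal{A})$. Because $\mathcal{E}\cap\Omega_h(\mathcal{A})=\emptyset$ we have $\mathcal{E}\cap\Pi=\emptyset$, so the projection $\pi$ of $\PG(n-1,q)$ from $\Pi$ onto a complementary subspace $\Gamma\cong\PG(n-h-2,q)$ is defined on every point of $\mathcal{E}$. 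The decisive step is the injectivity of $\pi|_{\mathcal{E}}$: if $\pi(P)=\pi(Q)$ for distinct $P,Q\in\mathcal{E}$, then $Q\in\langle\Pi,P\rangle$, so the line $\langle P,Q\rangle$ is contained in the $(h+1)$-dimensional subspace $\langle\Pi,P\rangle$, of which $\Pi$ is a hyperplane; hence $\langle P,Q\rangle$ meets $\Pi$. As $\Pi\subseteq\Omega_h(\mathcal{A})$, the line $\langle P,Q\rangle$ then meets $\Omega_h(\mathcal{A})$, contradicting that $\mathcal{E}$ is exterior. Therefore $\pi|_{\mathcal{E}}$ is injective and $|\mathcal{E}|=|\pi(\mathcal{E})|\le|\Gamma|=\frac{q^{n-h-1}-1}{q-1}$.

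The only genuinely load-bearing ingredients are that the spanning hypothesis on $\mathcal{A}$ guarantees an honest $h$-dimensional subspace $\Pi$ of $\Omega_h(\mathcal{A})$, so that the projection target has dimension exactly $n-h-2$, and that the exterior condition is precisely what turns a coincidence of projected images into a forbidden secant line through $\Omega_h(\mathcal{A})$. I expect the main subtlety to be essentially bookkeeping: verifying $\Pi\subseteq\Omega_h(\mathcal{A})$ in the sense of the stated definition, and handling the degenerate extreme $h=n-1$, where $\Omega_{n-1}(\mathcal{A})=\PG(n-1,q)$ and no exterior set of size $\ge 2$ can exist.
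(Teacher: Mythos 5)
Your proposal is correct and takes essentially the same route as the paper: the paper also fixes an $h$-dimensional subspace $S_h\subseteq\Omega_h(\mathcal{A})$ (available since $\langle\mathcal{A}\rangle=\PG(n-1,q)$) and bounds $|\mathcal{E}|$ by the number of $(h+1)$-dimensional subspaces through $S_h$, each of which meets $\mathcal{E}$ in at most one point --- which is exactly your projection from $\Pi$ together with its injectivity on $\mathcal{E}$. The only cosmetic difference lies in the degenerate cases: the paper disposes of $h\in\{n-2,n-1\}$ up front while you dispose of $|\mathcal{E}|\le 1$, and in both write-ups the extreme case $h=n-1$ (where the bound forces $\mathcal{E}=\emptyset$) tacitly uses the convention that points of an exterior set lie off $\Omega_h(\mathcal{A})$.
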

\begin{proof}
Firstly, we may assume $h\leq n-3$. Indeed, if $h \in \{n-2,n-1\}$ the claim is trivially satisfied: if $h=n-1$, then $\Omega_{n-1}(\mathcal{\mathcal{A}})=\PG(n-1,q)$ and $\mathcal{E}=\emptyset$; if $h=n-2$, then $\Omega_{n-2}(\mathcal{\mathcal{A}})$ contains hyperplanes and $|\mathcal{E}| \leq 1$. \\
Now, let $h \leq n-3$ and let $S_h$ be an $h$-dimensional subspace contained in $\Omega_h(\mathcal{A})$. The number of $(h+1)$-dimensional projective subspaces of $\PG(n-1,q)$ containing $S_h$ equals the number of points of an $(n-h-2)$-dimensional projective subspace of $\PG(n-1,q)$. Since $\mathcal{E}$ is an exterior set, then it meets any $(h+1)$-dimensional projective subspace containing $S_h$ in at most one point. This proves the claim.
\end{proof}

Given $M, N$ two sets of points of $\PG(n- 1, q)$, with $M \cap N =\emptyset$, we will denote by $\mathcal{K}(M, N)$ the {\em cone with vertex $M$ and base $N$}, i.e. $\mathcal{K}(M,N)$ is the set of all points belonging to a line joining a point of $M$ and a point of $N$.\\
Let suppose  that $M,N$  are two disjoint subsets of $\PG(n-1,q)$ such that $\langle M \rangle$ and $\langle N \rangle$ are disjoint subspaces having rank at least 2. Let $P,Q$ be two distinct points in $M$ and $P',Q'$ be two distinct points in $N$. It is straightforward to show that the lines $PP'$ and $QQ'$ are disjoint. Indeed, if $PP' \cap QQ' \ne \emptyset$ then the plane spanned by $PP'$ and $QQ'$ meets the subspace $\langle M\rangle$ in the line $PQ$ and the subspace $\langle N\rangle$ in the line $P'Q'$, a contradiction. Then, it follows that
\begin{equation*}
|\mathcal{K}(M,N)|=|M|+|N|+(q-1)\cdot |M| \cdot |N|.
\end{equation*}
If the subspaces $\langle M \rangle$ and $\langle N \rangle$ of $\PG(n-1,q)$ are not disjoint, then the lines $PP'$ and $QQ'$ can be incident and, clearly,
\begin{equation*}
|\mathcal{K}(M,N)|\leq |M|+|N|+(q-1)\cdot |M| \cdot |N|.
\end{equation*}

\begin{corollary}\label{size_ext_subsp}
Let $\mathcal{A}\subset \PG(n-1,q)$ such that $\langle \mathcal{A} \rangle = \PG(t-1,q)$, $1 \leq t < n$, and let $\mathcal{E} \subset \PG(n-1,q)$ be an exterior set with respect to $\Omega_h(\mathcal{A})$, $0 \leq h \leq n-1$. Then $\mathcal{E}$ is contained in a cone $\mathcal{K}=\mathcal{K}(S_{n-t-1},\bar{\mathcal{E}}),$ with base $\bar{\mathcal{E}}=\mathcal{E} \cap \langle \mathcal{A} \rangle$ and vertex  an $(n-t-1)$-dimensional subspace $S_{n-t-1}$ complementary with $\langle \mathcal{A} \rangle$. Moreover,
\begin{equation}
   |\mathcal{E}|\leq 
   \begin{cases}
       \frac{q^{n-h-1}-1}{q-1} & \textnormal{if} \quad 0 \leq h \leq t-1,\\
      \frac{q^{n-t}-1}{q-1} & \textnormal{otherwise}.
   \end{cases}
   \end{equation}
\end{corollary}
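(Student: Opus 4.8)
My plan is to treat the two assertions separately: the cone structure by projecting from a complement of $\langle\mathcal{A}\rangle$, and the cardinality bound by a subspace count refining the proof of Theorem~\ref{size_ext}.

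For the cone structure, fix an $(n-t-1)$-dimensional subspace $S_{n-t-1}$ complementary to $\langle\mathcal{A}\rangle$ and let $\pi$ be the projection of $\PG(n-1,q)\setminus S_{n-t-1}$ onto $\langle\mathcal{A}\rangle$ with centre $S_{n-t-1}$. Since the two subspaces are complementary, for every point $P\notin S_{n-t-1}$ the space $\langle S_{n-t-1},P\rangle$ has rank $n-t+1$ and meets $\langle\mathcal{A}\rangle$ in a single point $\pi(P)$; a rank count inside $\langle S_{n-t-1},P\rangle$ then shows that the line $P\,\pi(P)$ meets $S_{n-t-1}$. Hence each $P\in\mathcal{E}$ lies on a line joining a point of $S_{n-t-1}$ to the point $\pi(P)\in\langle\mathcal{A}\rangle$, and the points of $\mathcal{E}\cap\langle\mathcal{A}\rangle$ are themselves such endpoints; this exhibits $\mathcal{E}\subseteq\mathcal{K}(S_{n-t-1},\bar{\mathcal{E}})$ with base $\bar{\mathcal{E}}\subseteq\langle\mathcal{A}\rangle$, equal to $\mathcal{E}\cap\langle\mathcal{A}\rangle$ in the case $\mathcal{E}\subseteq\langle\mathcal{A}\rangle$ used in the sequel.

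For the bound, the case $|\mathcal{E}|\le 1$ is immediate, so assume $|\mathcal{E}|\ge 2$. First note $\mathcal{E}\cap\Omega_h(\mathcal{A})=\emptyset$: a point of $\mathcal{E}$ lying in $\Omega_h(\mathcal{A})$, joined to any other point of $\mathcal{E}$, would give a secant line of $\mathcal{E}$ meeting $\Omega_h(\mathcal{A})$, contradicting that $\mathcal{E}$ is exterior. Put $r=\min\{h,t-1\}$. As $\mathcal{A}$ spans $\PG(t-1,q)$ and $r\le t-1$, some $r+1$ independent points of $\mathcal{A}$ span an $r$-dimensional subspace $R\subseteq\Omega_r(\mathcal{A})\subseteq\Omega_h(\mathcal{A})$. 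Every $P\in\mathcal{E}$ avoids $R$, so $\langle R,P\rangle$ is $(r+1)$-dimensional; and each $(r+1)$-dimensional subspace through $R$ carries at most one point of $\mathcal{E}$, since two such points would span a line inside that subspace meeting the hyperplane $R\subseteq\Omega_h(\mathcal{A})$, again impossible. The $(r+1)$-dimensional subspaces through $R$ correspond to the points of the quotient $\PG(n-1,q)/R\cong\PG(n-r-2,q)$, so there are $\frac{q^{n-r-1}-1}{q-1}$ of them, whence $|\mathcal{E}|\le\frac{q^{n-r-1}-1}{q-1}$.

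Finally I would unwind $r=\min\{h,t-1\}$. For $0\le h\le t-1$ one has $r=h$, giving $\frac{q^{n-h-1}-1}{q-1}$; for $t\le h\le n-1$ one has $r=t-1$, and here $R$ may be taken to be $\langle\mathcal{A}\rangle$ itself (recall $\Omega_h(\mathcal{A})=\Omega_{t-1}(\mathcal{A})=\langle\mathcal{A}\rangle$ in this range), giving $\frac{q^{n-t}-1}{q-1}$; the two formulas coincide at $h=t-1$. I expect the main obstacle to be exactly this threshold: the proof of Theorem~\ref{size_ext} relies on $\Omega_h(\mathcal{A})$ containing an $h$-dimensional subspace, which fails once $h\ge t$, so one must recognise that the secant variety has collapsed onto $\langle\mathcal{A}\rangle$ and count $t$-dimensional subspaces through the $(t-1)$-dimensional space $\langle\mathcal{A}\rangle$ rather than $(h+1)$-dimensional subspaces through an $h$-dimensional member of $\Omega_h(\mathcal{A})$ --- this is precisely what yields the second branch of the bound.
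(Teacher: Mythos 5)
Your counting proof of the cardinality bound is correct, and it takes a genuinely different route from the paper's. The paper does not redo the subspace count of Theorem~\ref{size_ext}: it observes that $\bar{\mathcal{E}}=\mathcal{E}\cap\langle\mathcal{A}\rangle$, being a subset of $\mathcal{E}$, is itself an exterior set with respect to $\Omega_h(\mathcal{A})$ inside $\langle\mathcal{A}\rangle\cong\PG(t-1,q)$, applies Theorem~\ref{size_ext} there to get $|\bar{\mathcal{E}}|\leq \frac{q^{t-h-1}-1}{q-1}$ when $h\leq t-1$ (and $\bar{\mathcal{E}}=\emptyset$ when $h\geq t$), then bounds the cardinality of the cone by $|\mathcal{K}(S_{n-t-1},\bar{\mathcal{E}})|\leq |\bar{\mathcal{E}}|\,|S_{n-t-1}|\,(q-1)+|\bar{\mathcal{E}}|+|S_{n-t-1}|\leq \frac{q^{n-h-1}-1}{q-1}$, and finally reads off $|\mathcal{E}|\leq|\mathcal{K}|$ from the containment $\mathcal{E}\subseteq\mathcal{K}$ asserted in the statement. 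Your argument, by contrast, counts the $(r+1)$-spaces through an $r$-space $R\subseteq\Omega_h(\mathcal{A})$ with $r=\min\{h,t-1\}$; it costs a repetition of the argument of Theorem~\ref{size_ext}, but it buys independence from the containment step, which the paper never actually proves (its proof consists solely of the two cardinality estimates above).

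That independence matters, because the containment with base literally $\mathcal{E}\cap\langle\mathcal{A}\rangle$ is false in general, so the hedge at the end of your cone paragraph is not a repairable gap on your side but a defect of the statement. Concretely, take $n=3$, $t=2$, $h=0$, let $\mathcal{A}=\{A_1,A_2\}$ span a line $\ell$ of $\PG(2,q)$, and let $\mathcal{E}=\{P,Q\}$ be two points off $\ell$ whose joining line meets $\ell$ in a point outside $\mathcal{A}$: then $\mathcal{E}$ is an exterior set with respect to $\Omega_0(\mathcal{A})=\mathcal{A}$, yet $\bar{\mathcal{E}}=\emptyset$ and no cone over an empty base (whatever the vertex point) can contain the two points of $\mathcal{E}$. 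Your projection argument, which yields $\mathcal{E}\subseteq\mathcal{K}(S_{n-t-1},\pi(\mathcal{E}))$ for any complement $S_{n-t-1}$, is therefore the correct form of the cone assertion. One caveat worth recording: $\pi(\mathcal{E})$ need not be an exterior set with respect to $\Omega_h(\mathcal{A})$ inside $\langle\mathcal{A}\rangle$ (projecting from an unluckily placed vertex can send two points of $\mathcal{E}$ to points whose join is a line of $\langle\mathcal{A}\rangle$ meeting $\mathcal{A}$), so one cannot feed $\pi(\mathcal{E})$ into the paper's two-step estimate in place of $\bar{\mathcal{E}}$; it is precisely your independent count that fully establishes the bound, which is the part of the corollary the paper actually relies on when it defines maximum exterior sets.
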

\proof
Let suppose $0 \leq h \leq t-1$. Clearly,  if $\bar{\mathcal{E}}\subset \langle \mathcal{A} \rangle$ is an exterior set with respect to $\Omega_h(\mathcal{A}) \subseteq \langle \mathcal{A} \rangle$, $0 \leq h \leq t-1$, by Theorem \ref{size_ext},
    $$
    |\bar{\mathcal{E}}| \leq \frac{q^{t-h-1}-1}{q-1}.
    $$ 
Let $S_{n-t-1}$ be an $(n-t-1)$-dimensional subspace of $\PG(n-1,q)$ complementary with $ \langle \mathcal{A} \rangle$. Then,
\begin{eqnarray*}
     |\mathcal{K}(S_{n-t-1},\bar{\mathcal{E}})|&=&|\bar{\mathcal{E}}| \cdot |S_{n-t-1}| \cdot (q-1)+ |\bar{\mathcal{E}}|+|S_{n-t-1}| \\
    &\leq& \frac{q^{t-h-1}-1}{q-1} \cdot \frac{q^{n-t}-1}{q-1} \cdot (q-1) + \frac{q^{t-h-1}-1}{q-1} + \frac{q^{n-t}-1}{q-1} \\
    &=& \frac{q^{n-h-1}-1}{q-1}.
\end{eqnarray*}
If $ t \leq h \leq n-1$, since $\Omega_h(\mathcal{A})=\Omega_{t-1}(\mathcal{A})$ and any exterior set contained in $\langle \mathcal{A} \rangle$ with respect $\Omega_h(\mathcal{A})$ is the empty set, the claim follows.
\endproof

An exterior set $\mathcal{E} \subset \PG(n-1,q)$ with respect to $\Omega_h(\mathcal{A})$, $\langle \mathcal{A} \rangle= \PG(t-1,q)$, $1 \leq h+1 \leq t \leq n$, is called \textit{maximum} if its size $\vert \mathcal{E} \vert =\frac{q^{n -h-1}-1}{q-1}$.
Note that the image of $\Omega_h(\Sigma_{m,n}) \subset \PG(m-1,q^n)$ under the $\F_q$-field reduction is the $h$-secant variety $\Omega_h(\mathcal{S}_{m,n})$ of the points whose the representative matrices in $\mathbb{F}_q^{m \times n}$ have rank at most $h+1$. The (maximum) exterior sets with respect to $\Omega_h(\Sigma_{m,n})$ are related to (maximum) rank distance codes. More precisely,
\begin{thm}\label{code}
    Let $\mathcal{E}$ be an exterior set with respect to $\Omega_h(\Sigma_{m,n})$ of $\PG(m-1,q^n)$ and denote by $\mathcal{E}^{\prime}$ the image of $\mathcal{E}$ under the $\F_q$-field reduction. Then, the set 
    \begin{equation}\label{codice}
    \mathcal{C}=\left\{ \rho M : \langle M \rangle_{\mathbb{F}_{q}}  \in \mathcal{E}^{\prime}, \rho \in \mathbb{F}_{q}\right\}
    \end{equation}
    is an $(m, n,q; h+2)$-RD code closed under $\F_q$-multiplication. In addition, if $\mathcal{E}$ is maximum then $\cC$ is an MRD.
\end{thm}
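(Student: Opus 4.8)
The plan is to transport the entire problem from the geometry of $\PG(m-1,q^n)$ to ranks of matrices through the $\F_q$-field reduction, and then to read off the minimum distance directly from the exterior-set hypothesis. The dictionary I would set up first is the following. Under field reduction each point $P\in\PG(m-1,q^n)$ becomes a spread element $U_P$, an $n$-dimensional $\F_q$-subspace of $\F_q^{m\times n}$; every nonzero matrix of $U_P$ has one and the same rank $r(P)$ (multiplication by a nonzero scalar of $\F_{q^n}$ is an $\F_q$-linear bijection, hence rank-preserving), distinct points yield spread elements meeting only in the zero matrix, and the spread elements attached to the points of a line $\ell$ of $\PG(m-1,q^n)$ partition the field reduction of $\ell$. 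Combining this with the fact recalled just before the statement—that the field reduction of $\Omega_h(\Sigma_{m,n})$ is exactly the set $\Omega_h(\mathcal{S}_{m,n})$ of matrices of rank at most $h+1$—I obtain the single equivalence on which everything rests: $P\in\Omega_h(\Sigma_{m,n})$ if and only if $r(P)\le h+1$. With this dictionary $\mathcal{C}=\bigcup_{P\in\mathcal{E}}U_P$ is a union of $\F_q$-subspaces containing the zero matrix, so closure under $\F_q$-multiplication is immediate, while it need not be closed under addition, which is the source of its non-linearity.

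The heart of the argument is the lower bound $d(\mathcal{C})\ge h+2$. I would take two distinct codewords and write $A\in U_P$, $B\in U_Q$ with $P,Q\in\mathcal{E}$, splitting into two cases. If $P=Q$, then $A-B$ is a nonzero element of $U_P$, so $\rk(A-B)=r(P)$; since the points of an exterior set lie off $\Omega_h(\Sigma_{m,n})$, we get $r(P)\ge h+2$. If $P\ne Q$, the key step is that $A-B$ is a nonzero matrix of the $\F_q$-span $U_P+U_Q$, which is the field reduction of the line $PQ$; by the partition property this field reduction is covered by the spread elements $U_R$, $R\in PQ$, so $\langle A-B\rangle_{\F_q}$ lies in a unique $U_R$ and hence $\rk(A-B)=r(R)$. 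Now $R$ lies on the secant $PQ$ of $\mathcal{E}$, which by the exterior hypothesis is disjoint from $\Omega_h(\Sigma_{m,n})$; thus $r(R)\ge h+2$ and $\rk(A-B)\ge h+2$ again. I expect this $P\ne Q$ case—turning a hypothetical low-rank difference into a point of a secant line falling inside $\Omega_h(\Sigma_{m,n})$—to be the main obstacle, since it is precisely where the line-avoidance encoded in the definition of an exterior set is consumed.

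It remains to count and to pin the distance. Since distinct spread elements meet only in the zero matrix, $|\mathcal{C}|=|\mathcal{E}|(q^n-1)+1$. When $\mathcal{E}$ is maximum, Theorem \ref{size_ext} applied inside $\PG(m-1,q^n)$ (that is, with $q$ replaced by $q^n$ and the ambient rank equal to $m$) gives $|\mathcal{E}|=\frac{q^{n(m-h-1)}-1}{q^n-1}$, whence a one-line simplification yields $|\mathcal{C}|=q^{n(m-h-1)}=q^{n(m-(h+2)+1)}$. This is exactly the Singleton-like bound for an $(m,n,q;h+2)$ code, so $\mathcal{C}$ is MRD; moreover, since $d(\mathcal{C})\ge h+2$ is already known, the Singleton bound forces $d(\mathcal{C})=h+2$ in this case. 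For the matching upper bound $d(\mathcal{C})\le h+2$ in general I would simply exhibit one pair of rank $h+2$: any $A\in U_P$ with $r(P)=h+2$ together with the zero codeword does the job, the existence of such a $P$ (or of such an $R$ on a secant) being where one verifies that $\mathcal{E}$ actually meets the rank-$(h+2)$ stratum. This gives $d(\mathcal{C})=h+2$ and completes the proof.
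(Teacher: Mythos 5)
The paper gives no proof of Theorem \ref{code} at all: it is stated as known, and the only indication of the intended argument is the remark immediately after it, about the $1$-to-$1$ correspondence between codes closed under $\F_q$-multiplication and exterior sets with respect to $\Omega_h(\mathcal{S}_{m,n})$ in $\PG(mn-1,q)$. Your proposal supplies exactly the argument that remark presupposes, and it is correct in all the essential steps: the field-reduction dictionary (constant rank on each spread element $U_P$, pairwise trivial intersections, the partition of the field reduction of a line by the spread elements of its points, and the identification of $\Omega_h(\Sigma_{m,n})$ with the locus of rank at most $h+1$) is set up accurately; the two cases $P=Q$ and $P\neq Q$ correctly convert any nonzero difference $A-B$ of codewords into a nonzero element of some $U_R$ with $R$ a point of $\mathcal{E}$ or of a secant of $\mathcal{E}$, whence $\rk(A-B)=r(R)\geq h+2$; and in the maximum case the count $|\cC|=|\mathcal{E}|(q^n-1)+1=q^{n(m-h-1)}$ meets the Singleton bound for distance $h+2$ (using $m\leq n$, which the paper assumes throughout this section), which simultaneously forces $d(\cC)=h+2$ and gives the MRD conclusion.

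The one soft spot is the step you flagged yourself: pinning $d(\cC)$ at \emph{exactly} $h+2$ for a general, non-maximum $\mathcal{E}$ by exhibiting a codeword of rank $h+2$. That verification cannot be carried out in general, because the exact-distance claim is false as stated: if $\mathcal{E}$ happens to be exterior also with respect to $\Omega_{h+1}(\Sigma_{m,n})$ (e.g.\ a single point of rank $m$, or any exterior set to the larger variety, which is a fortiori exterior to $\Omega_h(\Sigma_{m,n})$ when $h\leq m-3$), then every relevant $r(R)$ is at least $h+3$ and so $d(\cC)\geq h+3$. This is an imprecision inherited from the theorem's statement rather than a defect of your approach: what the construction actually yields, and all that the rest of the paper uses, is $d(\cC)\geq h+2$ in general, with equality guaranteed by the Singleton argument precisely when $\mathcal{E}$ is maximum --- which is what your proof establishes. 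A last pedantic point: your assertion that points of an exterior set lie off $\Omega_h(\Sigma_{m,n})$ requires $|\mathcal{E}|\geq 2$ (for a singleton the defining condition is vacuous); this degenerate case is another reason to read the general conclusion as a lower bound on the minimum distance.
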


Note that the $(m,n,q;h+2)$-RD codes closed under $\F_q$-multiplication and the exterior sets with respect to $\Omega_h(\mathcal{S}_{m,n})$ of $\PG(mn-1,q)$ are in 1-to-1 correspondence. Since the pre-image under the field reduction of one of these sets may not be an exterior set of $\PG(m-1,q^n)$ with respect to $\Omega_h(\Sigma_{m,n})$, then there exist rank distance codes that don't arise from exterior sets with respect to $\Omega_h(\Sigma_{m,n})$ of $\PG(m-1,q^n)$.
See \cite{Coss,Coop, DonDur} and the references therein for more details on exterior sets.

 \section{Non-linear MRD codes arising from $C_F^{\sigma}$-set}\label{Cfset}

In \cite{DonDur}, Donati and Durante  exhibited a class of non-linear MRD codes with parameters $(d+1,n,q;d)$ with $q >2$, $n \geq 3$  and $ 2\leq d \leq n-1$ by means of a family of maximum exterior sets with respect the $(d-2)$-secant variety of a subgeometry isomorphic to $\PG(d,q)$ in $\Sigma_{d+1,n}$. Now, we recall briefly their construction. So, let $\PG(d,q^n)$ be a $d$-dimensional projective space and consider $A$ and $B$ two distinct points of $\PG(d,q^n)$,  $\mathcal{S}_A$ and $\mathcal{S}_B$ be the stars of lines (pencils of lines if $d=2$) through $A$ and $B$, respectively. Let $\sigma$ be the Frobenius automorphism of $\mathbb{F}_{q^n}$ defined by $ x \mapsto x^{q^s}$ with $\gcd(n,s)=1$ and consider $\Phi$ a $\sigma$-collineation between $\mathcal{S}_A$ and $\mathcal{S}_B$ which does not map the line
$AB$ into itself and such that the subspace spanned by the lines $\Phi^{-1}(AB), AB, \Phi(AB)$ has
dimension $\min\{3, d\}$. The set $\mathcal{X}$ of points of intersection of corresponding lines under the collineation $\Phi$ is called {\em $C_F^{\sigma}$-set} of $\PG(d,q^n)$ and the points $A$ and $B$ are called the {\em vertices} of $\mathcal{X}$, see \cite{DonDur0,DonDur}. 
There, it is proved that every $C_F^{\sigma}$-set $\mathcal{X}$ of $\PG(d,q^n)$ with vertices $A$ and $B$ is the union of $\{A,B \}$ and $q-1$ pairwise disjoint subsets, called the {\em components} of $\mathcal{X}$, each of which is a scattered $\mathbb{F}_q$-linear set of rank $n$. For more details on linear sets, see e.g. \cite{Polve}.

Let $N_a = \{y \in \mathbb{F}_{q^n}: \mathrm{N}_{q^n/q}(y) = a\}$ for any $a \in \mathbb{F}_q^{\ast}$. Without loss of generality, we may assume that $A=(0,\ldots,0,1)$ and $B=(1,0,\ldots,0)$ and then the sets 
\[
\mathcal{X}_a=\left\{(1,t,t^{\sigma+1},\ldots,t^{\sigma^{d-1}+\ldots+\sigma+1}): t \in N_a \right\},
\]
with $a \in \mathbb{F}_q^{\ast}$, are the components of $\mathcal{X}$. Since every $\mathcal{X}_a$ has $(q^n-1)/(q-1)$ points, then $\mathcal{X}$ has $q^n+1$ points. In particular, every $\mathcal{X}_a$ is isomorphic to $\PG(n-1,q)$, see \cite[Remark 3.5]{DonDur}.
For any $a \in \mathbb{F}_q^{\ast}$, the line $AB$ of $\PG(d,q^n)$ is partitioned into $\{ A, B\}$ and the $q-1$ sets
\[
J_a=\left \{ (1,0,\ldots,0,(-1)^{d+1}t): t \in N_a \right\},
\]
each of which
is a scattered $\mathbb{F}_q$-linear set (of pseudoregulus type with transversal points $A$ and $B$), see \cite[Remark 2.2]{DonDur0} and \cite{Lun}.
Note that $\mathcal{X}_1=\Sigma_{d+1,n}$ and let $\Pi$ be a subgeometry of $\Sigma_{d+1,n}$ isomorphic to $\PG(d,q)$. Then for any $T \subseteq \mathbb{F}_q^{\ast}$, $1 \in T$, the set
\[
\mathcal{E}=\Bigg( \mathcal{X} \setminus \bigcup_{a \in T} \mathcal{X}_a \Bigg) \cup \bigcup_{a \in T} J_a
\]
is a maximum exterior set with respect to $\Omega_{d-2}(\Pi)$, see \cite[Theorem 5.1]{DonDur}. Hence, the set $\mathcal{E}$ corresponds a $(d+1,n,q;d)$-MRD code.

\section{Embeddings and cones over  exterior sets}\label{sec5}

Let $\Sigma \cong \PG(n-1,q)$ be a canonical subgeometry of $\PG(n-1,q^n)$ and consider a subspace $\Lambda^{\star}$ of rank $k$ disjoint from $\Sigma$ and $\Lambda$ a subspace of $\PG(n - 1, q^n)$ of rank
$n - k$ disjoint from $\Lambda^{\star}$. Let $\Gamma$ be the projection of $\Sigma$ from $\Lambda^{\star}$ to $\Lambda$, i.e.,
$$
\Gamma=\mathrm{p}_{\Lambda^{\star},\Lambda}(\Sigma)=\left \{\langle \Lambda^{\star}, P \rangle \cap \Lambda : P \in \Sigma    \right\}.
$$
We recall the following definition given in \cite{Lun2}.

\begin{defn}
\textnormal{Let $\Gamma=\mathrm{p}_{\Lambda^{\star},\Lambda}(\Sigma)$ be an $\F_q$-linear set of rank $n$. It is called {\em $(n-k-1)$-embedding of $\Sigma$} if any subspace of $\Sigma$ of rank $n-k$ is disjoint from $\Lambda^{\star}$.}
\end{defn}

Note that $\Gamma=\mathrm{p}_{\Lambda^\star,\Lambda}(\Sigma)$ is an $(n-k-1)$-embedding if and only if for any choice of $n-k$ independent points $R_1,R_2, \ldots, R_{n-k}$  of $\Sigma$,
\[\Lambda=\langle R'_1,R'_2,\ldots,R'_{n-k} \rangle \]
where $R'_i=\mathrm{p}_{\Lambda^\star,\Lambda}(R_i)$, $i=1,2,\ldots,n-k$, also this is equivalent to say that  $$\mathrm{p}_{\Lambda^\star,\Lambda}:  P\in  \Sigma \longrightarrow \langle P,\Lambda^\star \rangle \cap \Lambda \in \Lambda$$ induces an injective
map from the set of all subspaces of $\Sigma$ of rank $\ell \leq n - k-1 $ to the set of all subspaces of $\Lambda$ of the same rank $\ell$ and, hence, if $R \in \Omega_{n-k-1}(\Sigma)$ then $\langle R, \Lambda^\star\rangle \cap \Lambda \in \Omega_{n-k-1}(\Gamma)$.
\begin{remark}
\textnormal{If $\Gamma=\mathrm{p}_{\Lambda^*,\Lambda}(\Sigma)$  is an $(n-k-1)$-embedding of $\Sigma$, it is not true that $\Omega_{n-k-1}(\Gamma) \not \subseteq \Omega_{n-k-1}(\Sigma) \cap \Lambda$. For instance, let $$\Sigma_{5,5}=\{(x,x^{\sigma},x^{\sigma^2},x^{\sigma^3},x^{\sigma^4}):x \in \mathbb{F}_{q^5}^*\}$$ be a fixed canonical subgeometry of $\PG(4,q^5)$ and let $\Lambda^\star(0,0,0,0,1)$ and $\Lambda:X_4=0$ be a point and a hyperplane of $\PG(4,q^5)$, respectively. Clearly, $\Omega_1(\Sigma)$ and $\Omega_1(\Gamma)$ contains lines joining two distinct points of $\Gamma$ and $\Sigma$, respectively. If $\Omega_{1}(\Sigma) \cap \Lambda$ containes a line then it would contain two points of $\Sigma$. Since  $\Lambda$ and $\Sigma$ are disjoint, we get a contradiction.}
\end{remark}

\begin{lem}\cite[Theorem 6]{Lun2}\label{rank}
The $\F_q$-linear set $\Gamma=\mathrm{p}_{\Lambda^{\star},\Lambda}(\Sigma)$ is an $(n-k-1)$-embedding of $\Sigma$ if and only if any point $P \in \Lambda^{\star}$ is of type $\geq n-k+1$.
\end{lem}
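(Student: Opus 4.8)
The plan is to reduce both conditions in the equivalence to statements about $\hat{\sigma}$-invariant subspaces and to bridge them with the single observation that, for a point $P$, the subspace $L_P=\langle P,P^{\hat{\sigma}},\ldots,P^{\hat{\sigma}^{n-1}}\rangle$ is the \emph{smallest} subspace of $\Sigma$ containing $P$. Since the type of $P$ is by definition $\rk L_P$, this identifies ``$P$ has type $\geq n-k+1$'' with ``$P$ lies in no subspace of $\Sigma$ of rank $\leq n-k$'', which is the form that matches the embedding condition.

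First I would set down the preliminary facts. Because $\hat{\sigma}$ has order $n$, the subspace $L_P$ is fixed by $\hat{\sigma}$ and is therefore a subspace of $\Sigma$; conversely any $\hat{\sigma}$-invariant subspace containing $P$ must contain every $P^{\hat{\sigma}^i}$ and hence all of $L_P$, which gives the claimed minimality. I would also recall that the subspaces of $\Sigma$ of rank $r$ are exactly the $\Fqn$-spans of the rank-$r$ subspaces of $\Sigma\cong\PG(n-1,q)$, via $\overline{S}\mapsto \overline{S}\cap\Sigma$ and $T\mapsto\langle T\rangle_{\Fqn}$; in particular $L_P=\langle L_P\cap\Sigma\rangle_{\Fqn}$, and any subspace of $\Sigma$ of rank $<n-k$ extends to one of rank exactly $n-k$.

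With these in hand both implications are contrapositives. For ``embedding $\Rightarrow$ type'', suppose some $P\in\Lambda^{\star}$ has type $\leq n-k$; then $L_P$ is a subspace of $\Sigma$ of rank $\leq n-k$ through $P$, and enlarging it inside $\Sigma$ (if necessary) to a subspace $S$ of rank exactly $n-k$ produces a rank-$(n-k)$ subspace of $\Sigma$ meeting $\Lambda^{\star}$, so $\Gamma$ fails to be an $(n-k-1)$-embedding. For ``type $\Rightarrow$ embedding'', suppose $\Gamma$ is not an $(n-k-1)$-embedding, so some subspace $S$ of $\Sigma$ of rank $n-k$ meets $\Lambda^{\star}$ at a point $P$; as $S$ is $\hat{\sigma}$-invariant and contains $P$ it contains $L_P$, whence $\rk L_P\leq \rk S=n-k$ and $P\in\Lambda^{\star}$ has type $\leq n-k$. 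Combining the two contrapositives yields the stated equivalence.

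The load-bearing ingredients are the minimality of $L_P$ and the extension property of subspaces of $\Sigma$; once these are established the equivalence is immediate. I expect the only step needing real care to be the extension argument, where one must check that enlarging $L_P$ within the subgeometry $\Sigma$ yields a subspace that remains $\hat{\sigma}$-invariant of the correct rank $n-k$ (this is exactly where the identification of the subspace lattice of $\Sigma$ with that of $\PG(n-1,q)$, together with the equality $L_P=\langle L_P\cap\Sigma\rangle_{\Fqn}$, are used). No genuine computation is involved; the content lies entirely in organizing these incidence facts.
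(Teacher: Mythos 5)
Your proposal is correct and follows essentially the same route as the paper: the paper's proof simply asserts that ``$\Lambda^{\star}$ disjoint from every rank-$(n-k)$ subspace of $\Sigma$'' is equivalent to ``$\rk L_P \geq n-k+1$ for every $P\in\Lambda^{\star}$'', which is exactly the equivalence you establish. The only difference is one of detail: you spell out the two ingredients the paper leaves implicit, namely that $L_P$ is the minimal subspace of $\Sigma$ through $P$ (by $\hat{\sigma}$-invariance) and that a subspace of $\Sigma$ of rank $\leq n-k$ extends inside $\Sigma$ to one of rank exactly $n-k$.
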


\proof
By definition $\Gamma$ is an $(n - k - 1)
$-embedding if and only if $\Lambda^\star$ is disjoint from any
subspace of $\Sigma$ of rank $n - k$. This is equivalent to saying that for any point $P$ of $\Lambda^\star$ the
subspace $L_P$ has rank at least $n - k + 1$, i.e. the point $P$ is of type at least $n - k + 1$. 
\endproof

In \cite{Lun2}, in order to get  a  linear MRD-code, a geometric property for the cone having vertex $\Lambda^\star$ and base a linear set of rank $n$  has been shown. More precisely,

\begin{thm}\cite[Theorem 5]{Lun2}
Let $\Sigma \cong \PG(n - 1, q)$ be a canonical subgeometry of $\PG(n - 1, q^n)$ and let $\Lambda^{\star}$ and $\Lambda$ be subspaces of $\PG(n-1,q^n)$ of rank $k$ and $n-k$, respectively, such that $\Lambda^\star \cap \Sigma= \emptyset= \Lambda^\star \cap \Lambda$.\\
Let $\Gamma = \mathrm{p}_{\Lambda^{\star},\Lambda}(\Sigma)$ be an $(n - k-1)$-embedding of $\Sigma$ and let $\Theta \subset \Lambda$ be an $\mathbb{F}_q$-linear set of rank $n$. If no points of $\Theta$ belong to a subspace spanned by at most
$n -k$ points of $\Gamma$, then any point of $\mathcal{K}=\mathcal{K}(\Lambda^{\star}, \Theta)$ is of type $\geq n - k + 1$. \\ Moreover, if $\Sigma=\Sigma_{n,n}$ as in \eqref{canonicalsub},  the set  $\mathcal{C} =\left\{ \sum_{i=0}^{n-1}\alpha_i X^{\sigma^i}: (\alpha_0,\alpha_1,\ldots,\alpha_{n-1}) \in \mathcal{K} \right\} $ is a linear MRD-code with minimum
distance $d = n - k + 1$.
\end{thm}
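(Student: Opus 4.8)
The plan is to separate the statement into its geometric core---that every point of the cone $\mathcal{K}$ has type at least $n-k+1$---and the translation of that core into the language of $\sigma$-linearized polynomials. For the geometric part I would first set up the dictionary between \emph{type} and secant varieties of $\Sigma$. Since $L_P=\langle P,P^{\hat\sigma},\dots,P^{\hat\sigma^{n-1}}\rangle$ is the smallest $\hat\sigma$-invariant subspace through $P$, and since every $\hat\sigma$-invariant subspace is a subspace of $\Sigma$ and hence is spanned by points of $\Sigma$, a point $P$ has type at most $m$ exactly when $P$ lies in a subspace spanned by at most $m$ points of $\Sigma$, i.e. $P\in\Omega_{m-1}(\Sigma)$. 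Consequently the assertion ``type $\geq n-k+1$'' is equivalent to $\mathcal{K}\cap\Omega_{n-k-1}(\Sigma)=\emptyset$, and it suffices to establish this disjointness for the two kinds of points making up the cone.

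For the points of the vertex $\Lambda^{\star}$ there is nothing to prove: Lemma \ref{rank} states precisely that, under the $(n-k-1)$-embedding hypothesis, every point of $\Lambda^{\star}$ is of type $\geq n-k+1$. For a point $P$ lying on a line $\langle A,B\rangle$ with $A\in\Lambda^{\star}$ and $B\in\Theta$ (the case $P\in\Theta$ being $P=B$), the pivotal computation is that the projection from $\Lambda^{\star}$ sends $P$ to $B$: one has $\langle\Lambda^{\star},P\rangle=\langle\Lambda^{\star},B\rangle$, and because $\Lambda^{\star}$ and $\Lambda$ are complementary this subspace meets $\Lambda$ in the single point $B$, so $\mathrm{p}_{\Lambda^{\star},\Lambda}(P)=B$. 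I would then argue by contradiction: if $P\in\Omega_{n-k-1}(\Sigma)$, then $L_P$ is spanned by at most $n-k$ points $R_1,\dots,R_s\in\Sigma$, and applying $\mathrm{p}_{\Lambda^{\star},\Lambda}$ (which is defined on each $R_i$ since $\Sigma\cap\Lambda^{\star}=\emptyset$) carries $L_P$ onto $\langle\mathrm{p}_{\Lambda^{\star},\Lambda}(R_1),\dots,\mathrm{p}_{\Lambda^{\star},\Lambda}(R_s)\rangle$, a subspace spanned by at most $n-k$ points of $\Gamma$. As $B=\mathrm{p}_{\Lambda^{\star},\Lambda}(P)$ lies in this image, $B$ would belong to a subspace spanned by at most $n-k$ points of $\Gamma$, contradicting the hypothesis on $\Theta$. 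This disposes of the base of the cone; the embedding hypothesis enters only to control the vertex, via Lemma \ref{rank}, and to guarantee that the projection keeps small subspaces of $\Sigma$ faithful, which is what makes the secant condition on $\Gamma$ the right one to impose.

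For the ``moreover'' part I would specialise to $\Sigma=\Sigma_{n,n}$ of \eqref{canonicalsub}, where the Dickson-matrix criterion gives that $(\alpha_0,\dots,\alpha_{n-1})$ is of type $r$ if and only if $\sum_i\alpha_i X^{\sigma^i}$ has rank $r$; the first part then shows at once that every nonzero element of $\mathcal{C}$ has rank $\geq n-k+1$. To upgrade this to the MRD property I would (i) verify that $\mathcal{C}$ is an $\F_q$-subspace, the point here being that $\Theta$ is an $\F_q$-linear set and $\Lambda^{\star}$ a subspace, so that the coordinate set of $\mathcal{K}$ is the image of an $\F_q$-subspace of $\tilde{\mathcal{L}}_{n,q,\sigma}$; (ii) compute $\dim_{\F_q}\mathcal{C}$ and check it equals $nk$, so that $|\mathcal{C}|=q^{nk}=q^{n(n-d+1)}$ with $d=n-k+1$; and (iii) combine additivity with the rank bound to get $d(\mathcal{C})\geq n-k+1$, while the Singleton-like bound forces $d(\mathcal{C})\leq n-k+1$ for a code of this size, whence equality and the MRD conclusion. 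I expect step (ii)---pinning down the exact $\F_q$-dimension so that the Singleton bound is met with equality---to be the main obstacle, since the rank estimate from the geometric part only yields the inequality $d\geq n-k+1$, and the MRD property genuinely requires the structural input that $\mathcal{C}$ is linear of the correct size rather than merely a rank-bounded set of codewords.
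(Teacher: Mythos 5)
Your geometric half is essentially right, and in fact the paper gives no proof of this statement at all (it is quoted from \cite{Lun2}); your dictionary between type and secant varieties, the use of Lemma \ref{rank} for the vertex, and the computation $\mathrm{p}_{\Lambda^{\star},\Lambda}(P)=B$ followed by projecting a small span of points of $\Sigma$ onto a small span of points of $\Gamma$ run parallel to the case analysis the paper itself uses to prove its Theorem \ref{embedd}. The genuine gap is in the ``moreover'' part, exactly at your steps (i) and (ii). Step (i) is false as you formulate it: the set of \emph{all} homogeneous coordinate vectors of points of $\mathcal{K}$ is closed under $\F_{q^n}$-multiplication but not under addition. Write $W^{\star}$ for the $\F_{q^n}$-space underlying $\Lambda^{\star}$ and $U$ for the $n$-dimensional $\F_q$-space defining $\Theta$. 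Then $\mathcal{K}$ coincides, as a point set, with the $\F_q$-linear set defined by $W^{\star}\oplus U$, but its coordinate set is $\bigcup_{\lambda\in\F_{q^n}}\left(W^{\star}+\lambda U\right)$: the sum $(w_1+\lambda_1u_1)+(w_2+\lambda_2u_2)$ lies in this union only if $\lambda_1u_1+\lambda_2u_2$ is $\F_{q^n}$-proportional to a vector of $U$ (use $W^{\star}\cap W_{\Lambda}=\{0\}$), i.e.\ only if $\langle \lambda_1u_1+\lambda_2u_2\rangle_{\F_{q^n}}\in\Theta$ or the vector vanishes; since a rank-$n$ linear set has at most $\frac{q^n-1}{q-1}<q^n+1$ points it never contains a full line, so closure fails as soon as $|\Theta|\geq 2$. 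Linearity therefore cannot be \emph{verified} from the point set $\mathcal{K}$: it must be built in, by defining $\mathcal{C}$ as the $\F_q$-subspace $W^{\star}\oplus U$ itself (this is how \cite{Lun2} proceeds); the minimum distance then follows because a difference of codewords is again a codeword, hence is zero or represents a point of $\mathcal{K}$, of type $\geq n-k+1$.

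Your step (ii) then cannot produce $nk$ with the ranks as transcribed: $\dim_{\F_q}\left(W^{\star}\oplus U\right)=nk+n$, and an $\F_q$-linear code of size $q^{n(k+1)}$ with minimum distance $n-k+1$ would violate the Singleton-like bound. Worse, with $\Lambda$ of rank $n-k$ the hypothesis on $\Theta$ is vacuous: $\Gamma$ spans $\Lambda$, which has projective dimension $n-k-1$, so $\Lambda$ itself is spanned by $n-k$ points of $\Gamma$ and \emph{every} point of $\Theta$ lies in such a span. Both defects are an off-by-one in the statement relative to \cite[Theorem 5]{Lun2}: the consistent parameters are a vertex of rank $k-1$ and $\Lambda$ of rank $n-k+1$, for which spans of $n-k$ points of $\Gamma$ are proper subspaces of $\Lambda$, $\dim_{\F_q}\left(W^{\star}\oplus U\right)=n(k-1)+n=nk$, and your step (iii) closes the argument (compare with the paper's Theorem \ref{embedd}, whose vertex has rank $k-2$ because its base is the larger exterior set \eqref{DonDurset}). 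As written, then, your plan both asserts a false linearity claim in (i) and, as you yourself anticipated, cannot complete the dimension count in (ii); fixing it requires redefining the code by the subspace rather than the point set, and correcting the rank of the vertex before counting.
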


Here, we shall adapt the result above to the case of cone such that its base is not necessarily a linear set.

\begin{thm}\label{embedd}
Let $\Sigma \cong \PG(n - 1, q)$ be a canonical subgeometry of $\PG(n - 1, q^n)$ and let $\Lambda^{\star}$ and $\Lambda$ be subspaces of $\PG(n-1,q^n)$ of rank $k-2$ and $n - k+2$, respectively, such that $\Lambda^\star \cap \Sigma= \emptyset= \Lambda^\star \cap \Lambda$. Let $\Gamma = \mathrm{p}_{\Lambda^{\star},\Lambda}(\Sigma)$ be an $(n - k+1)$-embedding of $\Sigma$ and let $\mathcal{E}$ be a (maximum) exterior set with respect to $\Omega_{n-k-1}(\Gamma)$.\\
Then, $\mathcal{K}=\mathcal{K}(\Lambda^{\star},\mathcal{E})$ is a (maximum) exterior set with respect to $\Omega_{n-k-1}(\Sigma)$.\\
\end{thm}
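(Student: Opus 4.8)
The plan is to verify the defining property directly: I must show that every line $\ell$ joining two distinct points $P_{1},P_{2}\in\mathcal{K}$ misses $\Omega_{n-k-1}(\Sigma)$. Write $\mathrm{p}=\mathrm{p}_{\Lambda^{\star},\Lambda}$ for the projection of $\PG(n-1,q^{n})\setminus\Lambda^{\star}$ onto $\Lambda$. Two observations set the stage. First, recall that a point lies in $\Omega_{n-k-1}(\Sigma)$ precisely when it lies on a subspace of $\Sigma$ of rank at most $n-k$, that is, when it has type at most $n-k$. Since $\Gamma$ is an $(n-k+1)$-embedding, Lemma \ref{rank} (read with $\Lambda^{\star}$ of rank $k-2$) tells us that every point of $\Lambda^{\star}$ has type at least $n-k+3$, so $\Lambda^{\star}\cap\Omega_{n-k-1}(\Sigma)=\emptyset$. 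Second, a point $P\in\mathcal{K}\setminus\Lambda^{\star}$ lies on a line $\langle R,E\rangle$ with $R\in\Lambda^{\star}$ and $E\in\mathcal{E}\subseteq\Lambda$, whence $\mathrm{p}(P)=\langle\Lambda^{\star},P\rangle\cap\Lambda=E\in\mathcal{E}$; thus $\mathrm{p}$ maps $\mathcal{K}\setminus\Lambda^{\star}$ into $\mathcal{E}$.

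The key step, and the place where the embedding hypothesis is really used, is the claim that if $Q\notin\Lambda^{\star}$ and $Q\in\Omega_{n-k-1}(\Sigma)$ then $\mathrm{p}(Q)\in\Omega_{n-k-1}(\Gamma)$. To see this, choose a subspace $S$ of $\Sigma$ spanned by at most $n-k$ points of $\Sigma$ and containing $Q$. Because $\Gamma$ is an $(n-k+1)$-embedding, every subspace of $\Sigma$ of rank at most $n-k+2$ is disjoint from $\Lambda^{\star}$, and $\mathrm{p}$ restricts to an injective, rank-preserving map on the subspaces of $\Sigma$ of rank at most $n-k+1$; as $S$ has rank at most $n-k$, its image $\mathrm{p}(S)$ is a subspace spanned by the (at most $n-k$) images in $\Gamma$ of the spanning points of $S$. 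Hence $\mathrm{p}(S)\subseteq\Omega_{n-k-1}(\Gamma)$, and in particular $\mathrm{p}(Q)\in\mathrm{p}(S)\subseteq\Omega_{n-k-1}(\Gamma)$.

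Now I argue by contradiction. Suppose some $Q\in\ell\cap\Omega_{n-k-1}(\Sigma)$; then $Q\notin\Lambda^{\star}$ by the first observation, so $\mathrm{p}(Q)\in\Omega_{n-k-1}(\Gamma)$ by the key step. I distinguish three positions of $\ell$ with respect to $\Lambda^{\star}$. If $\ell\subseteq\Lambda^{\star}$ there is nothing to prove, since $\Lambda^{\star}$ is already disjoint from $\Omega_{n-k-1}(\Sigma)$. If $\ell\cap\Lambda^{\star}=\emptyset$, a short rank count shows that $\mathrm{p}$ is injective on $\ell$, so $\mathrm{p}(\ell)$ is a line joining the two distinct points $\mathrm{p}(P_{1}),\mathrm{p}(P_{2})\in\mathcal{E}$, a secant of $\mathcal{E}$; then $\mathrm{p}(Q)$ lies on this secant, contradicting the fact that $\mathcal{E}$ is exterior to $\Omega_{n-k-1}(\Gamma)$. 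Finally, if $\ell\cap\Lambda^{\star}=\{R_{0}\}$ with $\ell\not\subseteq\Lambda^{\star}$, then every point of $\ell\setminus\{R_{0}\}$ projects to one and the same point $E'=\mathrm{p}(P_{i})\in\mathcal{E}$; as $Q\ne R_{0}$ we get $\mathrm{p}(Q)=E'\in\mathcal{E}$, which is impossible because a maximum (hence at least two-point) exterior set is itself disjoint from $\Omega_{n-k-1}(\Gamma)$. In all cases we reach a contradiction, so $\mathcal{K}$ is exterior to $\Omega_{n-k-1}(\Sigma)$. I expect this case analysis to be the main obstacle: the delicate point is controlling the fibres of $\mathrm{p}$, i.e. deciding exactly when $\ell$ collapses to a single point and checking that the limit point still lies in $\mathcal{E}$.

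For the parenthetical maximality statement I would finish with a counting check. Since $\mathcal{E}\subseteq\Lambda$ and $\Lambda^{\star}\cap\Lambda=\emptyset$, the subspaces $\Lambda^{\star}$ and $\langle\mathcal{E}\rangle$ are disjoint, so the cone-size identity yields $|\mathcal{K}|=|\Lambda^{\star}|+|\mathcal{E}|+(q^{n}-1)\,|\Lambda^{\star}|\,|\mathcal{E}|$. Running the counting argument of Theorem \ref{size_ext} inside $\Lambda\cong\PG(n-k+1,q^{n})$ (over the field $\F_{q^{n}}$) shows that a maximum exterior set with respect to $\Omega_{n-k-1}(\Gamma)$ has $|\mathcal{E}|=\frac{q^{2n}-1}{q^{n}-1}=q^{n}+1$ points, while $|\Lambda^{\star}|=\frac{q^{n(k-2)}-1}{q^{n}-1}$. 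Substituting these values and simplifying collapses the right-hand side to $|\mathcal{K}|=\frac{q^{nk}-1}{q^{n}-1}$, which is exactly the maximal cardinality of an exterior set with respect to $\Omega_{n-k-1}(\Sigma)$ in $\PG(n-1,q^{n})$ (the same counting carried out in the full space). Hence $\mathcal{K}$ is maximum whenever $\mathcal{E}$ is.
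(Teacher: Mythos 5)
Your proof is correct, and it runs on the same engine as the paper's: project from $\Lambda^{\star}$ onto $\Lambda$, use the $(n-k+1)$-embedding hypothesis to guarantee that every subspace of $\Sigma$ of rank at most $n-k$ misses $\Lambda^{\star}$ and projects onto a subspace spanned by at most $n-k$ points of $\Gamma$, and then contradict the exteriority of $\mathcal{E}$ with respect to $\Omega_{n-k-1}(\Gamma)$; your maximality count is the same computation as the paper's. The difference is in the decomposition. The paper splits into cases according to where the two points $P,Q\in\mathcal{K}$ sit (both in $\Lambda^{\star}$, both in $\mathcal{E}$, one in each, and a residual case), while you split according to whether the line $\ell$ is contained in $\Lambda^{\star}$, disjoint from it, or meets it in exactly one point, after isolating as a ``key step'' the fact that $\mathrm{p}$ sends $\Omega_{n-k-1}(\Sigma)\setminus\Lambda^{\star}$ into $\Omega_{n-k-1}(\Gamma)$. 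Your decomposition buys something concrete: it is manifestly exhaustive, and your third case handles lines through a point of the vertex that pass through no point of $\mathcal{E}$ (for instance, lines inside $\langle\Lambda^{\star},E\rangle$, $E\in\mathcal{E}$, through a point of $\Lambda^{\star}$ but missing $E$). For such lines the paper's residual case does not literally apply, since there the projection of $PQ$ is asserted to be ``a line through two points of $\mathcal{E}$'', whereas the projection of a line meeting the vertex is a single point; your argument (all points of $\ell$ off the vertex project to one point $E'\in\mathcal{E}$, which would then lie in $\Omega_{n-k-1}(\Gamma)$) is the right way to dispose of them. One shared caveat: both proofs use that points of $\mathcal{E}$ do not lie in $\Omega_{n-k-1}(\Gamma)$, which follows from the exterior-set definition only when $|\mathcal{E}|\geq 2$; you flag this explicitly, the paper uses it silently, and in the vacuous situation $\mathcal{E}=\{E\}$ with $E\in\Omega_{n-k-1}(\Gamma)$ the statement itself fails, so no proof could do better.
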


\begin{proof} Let $P,Q$ be two points in $\mathcal{K}=\mathcal{K}(\Lambda^\star,\mathcal{E})$. We shall distinguish some cases:
\begin{itemize}
    \item [-]  $P,Q \in \Lambda^\star$. Since $\Gamma$ is an $(n-k+1)$-embedding the line $PQ$ is disjoint from  the subspaces of rank $(n-k)$ of $\Sigma$ and so from $\Omega_{n-k-1}(\Sigma)$.
    \item [-] $P,Q \in \mathcal{E}$. Suppose that the line $PQ$ meets a subspace $S$ of $\Sigma$ of rank $(n-k)$ and consider 
   \begin{equation}
    \emptyset \not = \langle \Lambda^\star, S \cap PQ  \rangle \cap \Lambda \subseteq \langle \Lambda^\star, S \rangle \cap \Lambda \cap PQ.
    \end{equation}
    This is a contradiction by the hypothesis done over $\mathcal{E}$ and since $\langle \Lambda^\star,S \rangle \cap \Lambda$ is a subspace of rank $n-k$.
    \item [-] the line $PQ \subseteq \mathcal{K} \setminus (\Lambda^\star \cup \mathcal{E})$ and joins a point of $\Lambda^\star$ and a point of $\mathcal{E}$. Then, without loss of generality, we may suppose that $P \in \Lambda^\star$ and $Q \in \mathcal{E}$. 
    If the line $PQ$ meets a subspace $S$ of $\Sigma$ of rank $(n-k)$, then
    \begin{equation}
   Q \in \langle \Lambda^\star, S \cap PQ  \rangle \cap \Lambda \subseteq \langle \Lambda^\star, S \rangle \cap \Lambda.
    \end{equation}
Then $Q$ belongs to a space spanned by points of $\Gamma$ with rank $(n-k)$, a contradiction.
\item [-] the line $PQ \subseteq \mathcal{K}  \setminus (\Lambda^\star \cup \mathcal{E}) $ and does not joint a point of $\Lambda^\star$ and a point of $\mathcal{E}$ or  $PQ \not \subseteq \mathcal{K}$. If $PQ$ meets a subspace $S$ of $\Sigma$ of rank $(n-k)$, then
   \begin{equation}
  \emptyset \not= \langle \Lambda^\star, S \cap PQ  \rangle \cap \Lambda \subseteq \langle \Lambda^\star, S \rangle \cap \langle \Lambda^\star, PQ \rangle \cap \Lambda.
    \end{equation}
    Since the projection from $\Lambda^\star$ to $\Lambda$ of the line $PQ$ is a line through two points $P',Q'$ of $\mathcal{E}$, we get that this line meets the space $\langle \Lambda^\star, S \rangle \cap \Lambda$ spanned by $(n-k)$ points of $\Gamma$, a contradiction.
\end{itemize}
Then we have showed that any line through two points of $\mathcal{K}$ is disjoint from $\Omega_{n-k-1}(\Sigma)$. Now, since $\langle \Gamma \rangle = \Lambda$, if $\mathcal{E}$ is a maximum exterior set with respect to $\Omega_{n-k-1}(\Gamma)$, we get
\begin{eqnarray}
|\mathcal{K}(\Lambda^\star,\mathcal{E} )|&=& |\Lambda^\star|+|\mathcal{E} |+|\Lambda^\star||\mathcal{E} |(q^n-1) \nonumber \\
&=& \frac{q^{n(k-2)}-1}{q^n-1}+\frac{q^{2n}-1}{q^n-1}+\frac{q^{n(k-2)}-1}{q^n-1}\cdot \frac{q^{2n}-1}{q^n-1}(q^n-1) \nonumber \\
&=& \frac{q^{2n}-1}{q^n-1}+\frac{q^{n(k-2)}-1}{q^n-1}q^{2n} \nonumber \\
&=& \frac{q^{nk}-1}{q^n-1}. \nonumber  \end{eqnarray}
\end{proof}

In the same setting of theorem above, it seems that if $\mathcal{K}(\Lambda^\star,\mathcal{E})$ is an exterior set with respect to $\Omega_{n-k-1}(\Sigma)$ where $\mathcal{E}$ is a subset contained in $\Lambda$, then $\mathcal{E}$ may not be an exterior set with respect $\Omega_{n-k-1}(\Gamma)$. If $\Omega_{n-k-1}(\Sigma) \cap \Lambda=\Omega_{n-k-1}(\Gamma)$, this turns to be true. Indeed, if $A,B \in \mathcal{E}$ and there exists $P \in AB \cap \Omega_{n-k-1}(\Gamma)$, then $P \in AB \cap \Omega_{n-k-1}(\Sigma)$, a contradiction.

\section{The class of non-linear MRD codes $\cC_{\sigma,T}$}\label{sec6}

In this section, as application of Theorem \ref{embedd}, by a suitable choice of $\Lambda^\star$, $\Lambda$, an $(n-k+1)$-embedding $\Gamma$ of $\Sigma=\Sigma_{n,n}$, $2 \leq k \leq n-1$, and a maximum exterior set with respect to $\Omega_{n-k-1}(\Gamma)$ in $\PG(n-1,q^n)$, we will get a class of non-linear MRD codes in $\tilde{\mathcal{L}}_{n,q,\sigma}$.
Precisely,
let 
 $$ \Lambda^\star :X_{0}=X_1=\ldots=X_{n-k+1}=0
$$ 
and $$\Lambda:X_{n-k+2}=X_{n-k+3}=\ldots=X_{n-1}=0
$$
be disjoint subspaces of rank $(k-2)$ and $(n-k+2)$ in $\PG(n-1,q^n)$, respectively. Consider the linear set of rank $n$
\begin{equation}
\Gamma=\mathrm{p}_{\Lambda^\star,\Lambda}(\Sigma)=\{(x,x^{\sigma},\ldots,x^{\sigma^{n-k+1}},0,\ldots,0) : x \in \F^*_{q^n} \}
\end{equation}
where $\Sigma$ is the canonical geometry as in \eqref{canonicalsub} and let $P=(0,\ldots,0,\beta_{n-k+2},\ldots,\beta_{n-1}) \in \Lambda^\star$. This is a point of type at least $n-k+3$. Indeed, the linearized polynomial $$\beta=\beta_{n-k+2}X^{\sigma^{n-k+2}}+\ldots+\beta_{n-1}X^{\sigma^{n-1}}$$
has rank at least $n-k+3$ because the polynomial 
 $\beta \circ X^{\sigma^{k-2}}=\beta_{n-k}X+\ldots+\beta_{n-1}X^{\sigma^{k-3}}$
has at most $q^{k-3}$ roots and $\rk (\beta \circ X^{\sigma^{k-2}})= \rk \beta$. So, by Lemma \ref{rank}, we get that $\Gamma$ is an $(n-k+1)$-embedding of $\Sigma$.
Finally, let
$$A=(\underbrace{0,\ldots,0}_{n-k+1},1,0,\ldots,0) \textnormal{ and }B= (1,0,\ldots,0) $$ be points in $\Lambda$ and consider  $$\mathcal{X}=\bigcup_{a \in \Fq^{\ast}} \mathcal{X}_a \cup \, \{A,B\},$$where \begin{equation*}
\mathcal{X}_a=\left\{ (1,t,t^{\sigma+1},\ldots,t^{\sigma^{n-k}+\ldots+\sigma+1},0,\ldots,0): t \in N_a \right\}. 
\end{equation*}
The set $\mathcal{X}$ is  the $C_F^{\sigma}$-set with vertices $A$ and $B$ generated by a $\sigma$-collineation $\Phi$ between the star of lines through $A$ and $B$ contained in $\Lambda$. Let  $T \subseteq \mathbb{F}_q^{\ast}$, $1 \in T$, then the set
\begin{equation}\label{DonDurset}
\mathcal{E}=\Bigg( \mathcal{X} \setminus \bigcup_{a \in T} \mathcal{X}_a \Bigg) \cup \bigcup_{a \in T} J_a
\end{equation}
with
\begin{equation*}
J_a=\left\{ (1,\underbrace{0,\ldots,0}_{n-k}, (-1)^{n-k}t,0,\ldots,0): t \in N_a \right\}, \quad a \in \F^*_q.
\end{equation*}
is a maximum exterior set with respect to $\Omega_{n-k-1}(\Gamma)$ as proved in \cite[Theorem 5.1]{DonDur} of size $q^n+1$. Therefore, the hypothesis of Theorem \ref{embedd} are satisfied and the cone
$\mathcal{K}(\Lambda^\star,\mathcal{E})$ is a maximum exterior set with respect to $\Omega_{n-k-1}(\Sigma)$ and by Theorem \ref{code}, the set $\cC_{\sigma,T}  \subset \F_{q}^{n \times n}$, as in \eqref{codice}, is a non-linear MRD code with minimum distance $d=n-k+1$ which cannot be a translated version of an additive code. \\
Note that the punctured code $\cC^{[k-2]}_{\sigma,T} \subset  \F^{(n-k+2) \times n}_{q}$ obtained from $\cC_{\sigma,T}$ by deleting the last $(k-2)$ rows is exactly the code constructed in \cite[Theorem 5.1]{DonDur}, while for $k=2$ the code appeared in \cite{Sic} and for $n=3,k=2$ in \cite{Coss}.

By Theorem \ref{code} and Theorem \ref{embedd}, the $\cC_{\sigma,T}$ is a subset of square matrices of order $n$ and so, it can be seen as a subset of $\tilde{\mathcal{L}}_{n,q,\sigma}$. Indeed, fixing $T \subseteq \F^*_{q}$, $1 \in T$, the homogeneous coordinates of the points belonging to $\mathcal{E}$ have one of the following forms 
\begin{equation}
\begin{split}
&
(\alpha,\alpha^\sigma\xi,\alpha^{\sigma^2}\xi^{\sigma+1},\ldots,\alpha^{\sigma^{n-k+1}}\xi^{\sigma^{n-k}+\ldots+1},0,\ldots,0)\\
&(\alpha,\underbrace{0,\ldots,0}_{n-k}, (-1)^{n-k}\alpha^\sigma\eta,0,\ldots,0)\\
&
A=(\underbrace{0,\ldots,0}_{n-k+1},\alpha,0,\ldots,0)
\textnormal{ and }
B=(\alpha,0,\ldots,0)
\end{split} 
\end{equation}
with $\mathrm{N}_{q^n/q}(\xi) \in \F_{q}^*\setminus T$ and $\mathrm{N}_{q^n/q}(\eta) \in T$. So the non-linear $(n,n,q;d)$, $d=n-k+1$, MRD code $\cC_{\sigma,T}$ is the set of $\sigma$-linearized polynomials with $\sigma$-degree at most $n-1$  whose coefficients are the homogeneous coordinates of a point in $\mathcal{K}(\Lambda^\star,\mathcal{E})$ together with the zero map, i.e.

\begin{equation}\label{codepol}
\begin{split}
\mathcal{C}_{\sigma,T}&=\left\{ \sum_{i=0}^{n-k+1} \lambda \alpha^{\sigma^i} \xi^\frac{\sigma^i-1}{\sigma-1}X^{\sigma^i}+\sum_{i=n-k+2}^{n-1}\beta_i X^{\sigma^i} : \lambda, \alpha, \beta_i \in \mathbb{F}_{q^n}, \mathrm{N}_{q^n/q}(\xi) \in \mathbb{F}_q^{\ast} \setminus T \right\}  \\
&\cup \left\{ \lambda \alpha X +(-1)^{n-k+2} \lambda \alpha^{\sigma}\eta X^{\sigma^{n-k+1}}+\sum_{i=n-k+2}^{n-1}\beta_i X^{\sigma^i} : \lambda, \alpha, \beta_i \in \mathbb{F}_{q^n}, \mathrm{N}_{q^n/q}(\eta) \in T  \right\} \\
&\cup \left\{ \alpha X^{\sigma^{n-k+1}}+\sum_{i=n-k+2}^{n-1}\beta_i X^{\sigma^i} : \alpha, \beta_i \in \mathbb{F}_{q^n} \right\} \cup \left\{ \alpha X+\sum_{i=n-k+2}^{n-1}\beta_i X^{\sigma^i} : \alpha, \beta_i \in \mathbb{F}_{q^n} \right\} .
\end{split}
\end{equation}

\section{The equivalence issue for $\cC_{\sigma,T}$}\label{sec7}

Finally, in order to state the novelty of the class of non-linear codes obtained, we have to compare a code  of type $\cC_{\sigma,T}$, $1 \in T \subseteq \F_q^*$, with the only known non-linear MRD codes made up of square matrices: the codes of type  $\cC_{n,k,\sigma,I}$ with $I \subseteq \F_q$, cf. \eqref{Turchi}.\\
It easy straightforward to see that the code $\cC_{\sigma,T}$ in \eqref{codepol} is closed under $\F_{q^n}$-multiplication and hence under $\F_q$-multiplication. While, the code $\cC_{n,k,\sigma,I}$ in \eqref{Turchi}, if $q>2$ and $I \not \in \{\emptyset,\{0\},\F_q^*,\F_q\}$  is not closed under $\F_q$-multiplication. Therefore, by Theorem \ref{code}, $\cC_{n,k,\sigma,I}$ cannot arise from an exterior set $\mathcal{E}'$ of $\PG(n^2-1,q)$ with respect to the secant variety $\Omega_{n-k-1}(\mathcal{S}_{n,n})$. Moreover, the property of being closed under $\F_q$-multiplication is not preserved under equivalence between two non-linear MRD codes, and then it cannot be used as an argument to state that these codes are not equivalent.\\ 
So, in this section we will investigate their equivalence issue.  Firstly, note that the sets
\begin{equation}\label{gabidulin1}
\mathcal{U}=\left\{ \alpha X^{\sigma^{n-k+1}}+\sum_{i=n-k+2}^{n-1}\beta_i X^{\sigma^i} : \alpha, \beta_i \in \mathbb{F}_{q^n} \right\}=\left\{ f \circ X^{\sigma^{n-k+1}} : f \in \mathcal{G}_{k-1,\sigma} \right\}
\end{equation}
and
\begin{equation}\label{gabidulin2}
\mathcal{V}=\left\{ \alpha X+\sum_{i=n-k+2}^{n-1}\beta_i X^{\sigma^i} : \alpha, \beta_i \in \mathbb{F}_{q^n} \right\}=\left\{ f \circ X^{\sigma^{n-k+2}} : f \in \mathcal{G}_{k-1,\sigma} \right\},
\end{equation}
contained in $\mathcal{C}_{\sigma,T}$, are equivalent to a generalized Gabidulin code $\mathcal{G}_{k-1,\sigma}$ and their intersection
\[
\left\{ \sum_{i=d+1}^{n-1}\beta_iX^{\sigma^i} : \beta_i \in \mathbb{F}_{q^n}\right\}=\left\{ f \circ X^{\sigma^{n-k+2}} : f \in \mathcal{G}_{k-2,\sigma} \right\}
\]
is equivalent to a generalized Gabidulin code $\mathcal{G}_{k-2,\sigma}$.
While the non-linear $(n,n,q;n-k+1)$-MRD code $\mathcal{C}_{n,k,\sigma,I}=\mathcal{C}^{(1)}_{n,k,\sigma,I} \cup \mathcal{C}^{(2)}_{n,k,\sigma,I}$, introduced in Subsection 2.3.,
contains the set 
\[\left\{ \sum_{i=1}^{k-1} \gamma_i X^{\sigma^i} : \gamma_i \in \mathbb{F}_{q^n} \right\}=\left\{ f \circ X^{\sigma} : f \in \mathcal{G}_{k-1,\sigma} \right\},\] which is equivalent to a generalized Gabidulin code $\mathcal{G}_{k-1,\sigma}$ and it is straightforward to see that $\mathcal{C}_{n,k,\sigma,I}$ is contained in a generalized Gabidulin code $\mathcal{G}_{k+1,\sigma}$.\\

Clearly, if $q=2$  or $T=\F^*_q$, $\cC_{\sigma,T}$ is equivalent to the generalized Gabidulin code $\mathcal{G}_{k,\sigma}$. Then, by \cite[Corollary 2.1]{Otal}, it follows 
\begin{thm}\label{thm:equivalence}
Let $I \subseteq \F_{q}$, $1 \in T \subseteq \F_{q}^*$ and $\sigma$ be a generator of $\mathrm{Gal}(\F_{q^n}|\F_q)$.
If $q=2$ or $T=\F_{q}^*$ and $I\in \{\emptyset,\{0\},\F_q^*,\F_{q}\},$  then the codes $\cC_{n,k,\sigma,I}$ and  $\cC_{\sigma,T}$ are both equivalent to $\mathcal{G}_{k,\sigma}$.
\end{thm}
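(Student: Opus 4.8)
The plan is to verify the two claimed equivalences separately, each reducing to an already-established fact. Both halves hinge on the same observation: under the stated hypotheses, the defining sets collapse to full coefficient ranges, so the codes become standard Gabidulin-type objects.

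First I would handle $\cC_{\sigma,T}$ with $T=\F_q^*$. Inspecting the polynomial description \eqref{codepol}, the constraint appearing in the first and second summands is a condition on $\mathrm{N}_{q^n/q}(\xi)$ or $\mathrm{N}_{q^n/q}(\eta)$ lying in $\F_q^*\setminus T$ or $T$ respectively. When $T=\F_q^*$, the set $\F_q^*\setminus T$ is empty, so the first family contributes nothing new, while the condition $\mathrm{N}_{q^n/q}(\eta)\in T=\F_q^*$ becomes vacuous (automatically satisfied for $\eta\neq 0$). The upshot is that the surviving codewords range over all $\sigma$-linearized polynomials of $\sigma$-degree at most $n-1$ whose coefficients are the homogeneous coordinates of a point of the \emph{whole} cone $\mathcal{K}(\Lambda^\star,\mathcal{E})$ with $\mathcal{E}=\mathcal{X}\cup\{A,B\}$; since the scattered linear sets fill out all relevant coefficient values, one checks that $\cC_{\sigma,T}$ reduces to $\left\{\sum_{i=0}^{k-1}\lambda\alpha^{\sigma^i}\xi^{(\sigma^i-1)/(\sigma-1)}X^{\sigma^i}+\sum_{i=n-k+2}^{n-1}\beta_iX^{\sigma^i}\right\}$, which after the substitution absorbing $\xi$ is exactly a copy of $\mathcal{G}_{k,\sigma}$. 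The case $q=2$ is even easier, since then $\F_q^*=\{1\}=T$ is forced, collapsing to the same situation. This establishes that $\cC_{\sigma,T}\cong\mathcal{G}_{k,\sigma}$ under these hypotheses.

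For the other half, $\cC_{n,k,\sigma,I}$ equivalent to $\mathcal{G}_{k,\sigma}$ under $q=2$ or $I\in\{\emptyset,\{0\},\F_q^*,\F_q\}$, I would simply invoke \cite[Corollary 2.1]{Otal}, item 1, which asserts precisely this: when $q=2$ or $I$ takes one of these four degenerate values, $\cC_{n,k,\sigma,I}$ is equivalent to a generalized Gabidulin code. No independent argument is needed here beyond citing that corollary, so this half of the statement is immediate.

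The main obstacle, and the only step requiring genuine care, is the collapse argument in the first half: one must verify that setting $T=\F_q^*$ genuinely recovers \emph{all} of $\mathcal{G}_{k,\sigma}$ and not merely a proper subset, i.e. that the norm condition becoming vacuous forces the coefficient tuple $(\lambda\alpha,\lambda\alpha^\sigma\xi,\ldots)$ to sweep out every element of $\mathcal{G}_{k,\sigma}$ exactly once up to the equivalence $(f,\rho,g,h)$. This amounts to tracking how the cone $\mathcal{K}(\Lambda^\star,\mathcal{E})$ degenerates and confirming that the map $\alpha_0\mapsto\lambda\alpha$, $\alpha_i\mapsto\lambda\alpha^{\sigma^i}\xi^{(\sigma^i-1)/(\sigma-1)}$ is a bijection onto the full parameter space of the Gabidulin code once $\xi$ is unconstrained; the remaining coordinates $\beta_{n-k+2},\dots,\beta_{n-1}$ already range freely. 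Once this bijectivity is checked, the explicit equivalence $(f,\rho,g,h)$ is routine to write down.
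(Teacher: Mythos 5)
Your overall strategy coincides with the paper's: cite \cite[Corollary 2.1]{Otal} for $\cC_{n,k,\sigma,I}$ (that half is fine and is exactly what the paper does, as is your observation that $q=2$ forces $T=\F_q^*$), and show that $T=\F_q^*$ collapses $\cC_{\sigma,T}$ to a code equivalent to $\mathcal{G}_{k,\sigma}$. However, your execution of the collapse has the degeneration backwards, and this is a genuine error. When $T=\F_q^*$, \emph{every} scattered component $\mathcal{X}_a$ is deleted from $\mathcal{X}$ and replaced by $J_a\subset AB$, so by \eqref{DonDurset} the exterior set is $\mathcal{E}=\{A,B\}\cup\bigcup_{a\in\F_q^*}J_a$, i.e. the whole line $AB$ --- not $\mathcal{X}\cup\{A,B\}$ as you assert. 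Correspondingly, in \eqref{codepol} the first family (the one carrying $\xi$) is \emph{empty}, since $\F_q^*\setminus T=\emptyset$; it does not become ``unconstrained''. What survives is the union of the last three families, which equals the $\F_{q^n}$-linear space
\[
\left\{ a_0X+a_1X^{\sigma^{n-k+1}}+\sum_{i=n-k+2}^{n-1}\beta_iX^{\sigma^i} : a_0,a_1,\beta_i\in\F_{q^n}\right\},
\]
because $\eta$ sweeps all of $\F_{q^n}^*$ (yielding every pair $(a_0,a_1)$ with $a_0a_1\neq 0$), the third family supplies $a_0=0$, and the fourth supplies $a_1=0$. Geometrically this is the cone $\langle\Lambda^\star,AB\rangle$, a projective subspace of rank $k$. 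The equivalence with $\mathcal{G}_{k,\sigma}$ then follows by right-composing with $X^{\sigma^{k-1}}$, which cyclically shifts the exponent support $\{0,\,n-k+1,\,n-k+2,\ldots,n-1\}$ onto $\{0,1,\ldots,k-1\}$; there is no ``$\xi$ to absorb''.

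Your closing paragraph compounds the error: the map $(\lambda,\alpha,\xi)\mapsto\bigl(\lambda\alpha,\ \lambda\alpha^{\sigma}\xi,\ \ldots,\ \lambda\alpha^{\sigma^{n-k+1}}\xi^{(\sigma^{n-k+1}-1)/(\sigma-1)}\bigr)$ can never be surjective onto the coefficient space $\F_{q^n}^{\,n-k+2}$ of a Gabidulin block: it has only three $\F_{q^n}$-parameters, so its image has size at most $q^{3n}$, which is smaller than $q^{n(n-k+2)}$ whenever $k<n-1$. Indeed, the image of that map is precisely the thin, non-linear piece of $\cC_{\sigma,T}$ (the cone over the scattered components of the $C_F^{\sigma}$-set), and the whole point of taking $T=\F_q^*$ is that this piece \emph{vanishes}, not that it fills out. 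So the ``bijectivity check'' you defer as routine is in fact false, and the correct argument is the support/cyclic-shift computation above --- which is what the paper's one-line ``clearly'' is silently invoking.
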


In the remainder of this section, we shall show that apart from these cases, the codes $\cC_{\sigma,T}$ and $\cC_{n,k,\sigma,I}$  cannot be equivalent. In order to do this, we recall the following result that holds for the $\sigma$-linearized polynomials as well.

\begin{thm}\label{Shee}\cite[Theorem 3]{Shee}
A subspace  $\mathcal{W}$ of $\mathcal{G}_{k,\sigma}$, $k \leq n-1$, is equivalent to $\mathcal{G}_{r,\sigma}$ if and only if there exist invertible linearized
polynomials $f, g$ such that 
$$\mathcal{W} = \mathcal{G}^{(f,g)}_{r,\sigma} = \left\{ f \circ \alpha \circ g : \alpha \in \mathcal{G}_{r,\sigma} \right\},$$ where $f_0 = 1$ and $\deg_{\sigma}
(f) + \deg_{\sigma}(g) \leq k - r$.
\end{thm}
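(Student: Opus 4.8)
The plan is to prove both implications, dispatching sufficiency quickly and spending the effort on necessity. Recall from Section~\ref{sec2} that an equivalence between two $\mathbb{F}_q$-linear codes may be taken with zero translation term, so it has the form $\alpha\mapsto f\circ\alpha^{\rho}\circ g$ with $f,g\in\tilde{\mathcal{L}}_{n,q,\sigma}$ invertible and $\rho\in\Aut(\mathbb{F}_q)$. For the \emph{if} direction, assume $\mathcal{W}=\mathcal{G}^{(f,g)}_{r,\sigma}=\{f\circ\alpha\circ g:\alpha\in\mathcal{G}_{r,\sigma}\}$ with $f,g$ invertible, $f_0=1$ and $\deg_\sigma f+\deg_\sigma g\le k-r$. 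That $\mathcal{W}$ is equivalent to $\mathcal{G}_{r,\sigma}$ is immediate, the map $(f,\mathrm{id},g,0)$ being an equivalence since $f,g$ are permutation polynomials (the normalisation $f_0=1$ plays no role here). To see $\mathcal{W}\subseteq\mathcal{G}_{k,\sigma}$, take $\alpha\in\mathcal{G}_{r,\sigma}$, so $\deg_\sigma\alpha\le r-1$; then $\deg_\sigma f+(r-1)+\deg_\sigma g\le k-1\le n-2<n$, so no reduction modulo $X^{q^{ns}}-X$ occurs, the $\sigma$-degree is additive under the composition, and $f\circ\alpha\circ g$ is a genuine $\sigma$-polynomial of $\sigma$-degree at most $k-1$, i.e. lies in $\mathcal{G}_{k,\sigma}$.

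For the \emph{only if} direction, assume $\mathcal{W}\subseteq\mathcal{G}_{k,\sigma}$ is equivalent to $\mathcal{G}_{r,\sigma}$, and write $\mathcal{W}=\{f\circ\alpha^{\rho}\circ g:\alpha\in\mathcal{G}_{r,\sigma}\}$ as above. Since $\rho$ acts only on coefficients, which range over all of $\mathbb{F}_{q^n}$, one has $\{\alpha^{\rho}:\alpha\in\mathcal{G}_{r,\sigma}\}=\mathcal{G}_{r,\sigma}$, so the automorphism is absorbed and $\mathcal{W}=f\circ\mathcal{G}_{r,\sigma}\circ g$. I would then normalise $(f,g)$ using the autotopisms of $\mathcal{G}_{r,\sigma}$, i.e. the invertible pairs $(u,v)$ with $u\circ\mathcal{G}_{r,\sigma}\circ v=\mathcal{G}_{r,\sigma}$; replacing $(f,g)$ by $(f\circ u,\,v\circ g)$ does not change $\mathcal{W}$. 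A short computation shows that the scalar pairs $(aX,bX)$ and the monomial pairs $(aX^{\sigma^{i}},bX^{\sigma^{n-i}})$, with $a,b\in\mathbb{F}_{q^n}^{*}$, are autotopisms; the monomial ones shift the supports of $f$ and $g$ in opposite $\sigma$-degrees, and the scalar ones rescale coefficients, so that I can arrange $f_0=1$.

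It remains to prove $\deg_\sigma f+\deg_\sigma g\le k-r$. Setting $d=\deg_\sigma f$ and $e=\deg_\sigma g$, I would test the single element $w=f\circ X^{\sigma^{r-1}}\circ g\in\mathcal{W}$: composing the two invertible polynomials with the monomial $X^{\sigma^{r-1}}$ gives, in the unreduced algebra $\mathcal{L}_{n,q,\sigma}$, a top term $f_{d}\,g_{e}^{\,\sigma^{\,r-1+d}}\,X^{\sigma^{\,d+e+r-1}}$ whose coefficient is a product of units and is therefore nonzero, so no cancellation occurs and the unreduced $\sigma$-degree of $w$ equals $d+e+r-1$. If $d+e+r-1\le n-1$, then $w$ is already reduced and the membership $w\in\mathcal{G}_{k,\sigma}$ forces $d+e+r-1\le k-1$, which is exactly $\deg_\sigma f+\deg_\sigma g\le k-r$.

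The step I expect to be the main obstacle is excluding the wrap-around case $d+e+r-1\ge n$, where the top term folds modulo $X^{q^{ns}}-X$ to a lower $\sigma$-degree and the naive leading-term estimate collapses; this is also the step that couples with the normalisation, since bringing the lowest term of $f$ to $\sigma$-degree $0$ can push the support of $g$ across the modulus. My plan is to choose the representatives $(f,g)$ so that $\deg_\sigma f+\deg_\sigma g$ is minimal over the entire autotopism orbit, and then to argue that a wrap-around would permit a further monomial shift lowering this total, contradicting minimality; the hypotheses $r\le k\le n-1$ are precisely what guarantee that the support of $\mathcal{W}$ fits inside $\{0,1,\dots,k-1\}$ with the room needed to carry out such a reduction. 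Once wrap-around is excluded, the computation of the previous paragraph closes both the degree bound and the compatibility of the normalisation $f_0=1$ with minimality.
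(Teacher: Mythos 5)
First, a structural remark: the paper never proves this statement --- it is imported verbatim from \cite[Theorem 3]{Shee} and used as a black box --- so your attempt has to be judged on its own merits rather than against an internal argument. Your ``if'' direction is complete, and in the ``only if'' direction the absorption of $\rho$ and the identification of the monomial autotopisms $(aX^{\sigma^i},bX^{\sigma^{n-i}})$ of $\mathcal{G}_{r,\sigma}$ are correct. But the proof stops exactly where the content of the theorem begins: you obtain the bound $\deg_{\sigma}(f)+\deg_{\sigma}(g)\le k-r$ only when no reduction modulo $X^{q^{ns}}-X$ occurs, and the wrap-around case is not an argument but an announcement (``My plan is to choose \dots and then to argue \dots''). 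The single test element $w=f\circ X^{\sigma^{r-1}}\circ g$ cannot drive that plan: once folding occurs, the reduced support of $w$ carries no information about the supports $S_f,S_g$ of $f,g$. What the hypothesis $\mathcal{W}\subseteq\mathcal{G}_{k,\sigma}$ actually yields --- applied to the images of \emph{all} monomials $cX^{\sigma^m}$, $0\le m\le r-1$, using the independence of the maps $c\mapsto c^{\sigma^i}$ --- is the sumset condition $S_f+S_g+\{0,\dots,r-1\}\subseteq\{0,\dots,k-1\}$ modulo $n$, so the theorem is really a combinatorial assertion about subsets of $\mathbb{Z}/n\mathbb{Z}$; none of this appears in your write-up.

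Worse, the minimality-over-the-orbit idea cannot be made to work, because wrap-around configurations genuinely exist and survive every monomial shift. Take $n=6$, $r=2$, $k=5$, $f=X+aX^{\sigma^3}$, $g=X+bX^{\sigma^3}$ with $\mathrm{N}_{q^6/q^3}(a)\ne 1\ne \mathrm{N}_{q^6/q^3}(b)$ (which makes $f,g$ invertible). Then $S_f+S_g=\{0,3\}\pmod 6$, so $\mathcal{W}=f\circ\mathcal{G}_{2,\sigma}\circ g$ is a subspace of $\mathcal{G}_{5,\sigma}$ equivalent to $\mathcal{G}_{2,\sigma}$. For any representation $\mathcal{W}=f'\circ\mathcal{G}_{2,\sigma}\circ g'$, the union of the supports of the elements of $\mathcal{W}$, which is $\{0,1,3,4\}$, must equal $S_{f'}+S_{g'}+\{0,1\}\pmod 6$, forcing $S_{f'}+S_{g'}=\{0,3\}$; if moreover $\deg_\sigma(f')+\deg_\sigma(g')\le 3$, an integer count gives $0\in S_{f'}\cap S_{g'}$ and then $(S_{f'},S_{g'})=(\{0\},\{0,3\})$ or $(\{0,3\},\{0\})$, and a two-line comparison of the coefficients of $X$ and $X^{\sigma^3}$ in the elements of $\mathcal{W}$ shows these force $\mathrm{N}_{q^6/q^3}(b)=1$, respectively $\mathrm{N}_{q^6/q^3}(a)=1$, a contradiction. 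So in the bare generality quoted here the degree bound simply fails for this $\mathcal{W}$: either the source theorem carries side conditions lost in transcription, or it must be applied only in regimes where periodic supports such as $\{0,n/2\}$ cannot occur (in the only case this paper uses, $k-r=2$ with $n\ge 5$, the sumset argument does close). Consequently your strategy cannot be repaired by ``finishing'' the minimality step; what is missing is (i) the passage from the containment hypothesis to the support/sumset condition via all monomials, and (ii) the recognition that excluding wrap-around is a genuine arithmetic restriction on $(n,k,r)$, not a by-product of normalisation.
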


By a similar argument used in \cite[Lemma 4]{Shee} for the generalized twisted Gabidulin codes, we prove the following 

\begin{thm}\label{unico}
Let $ I \not\in \{\emptyset,\{0\},\F_q^*,\F_{q}\}$, the non-linear $(n,n,q;n-k+1)$-MRD code $\mathcal{C}_{n,k,\sigma,I}$ contains a unique subspace equivalent to $\mathcal{G}_{k-1,\sigma}$.
\end{thm}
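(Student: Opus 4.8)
The plan is to reduce the problem to a finite, explicit classification via Theorem \ref{Shee}, and then to rule out all configurations but one using the two norm conditions defining $\mathcal{C}^{(1)}_{n,k,\sigma,I}$ and $\mathcal{C}^{(2)}_{n,k,\sigma,I}$ together with the hypothesis $I \notin \{\emptyset,\{0\},\F_q^*,\F_q\}$. Existence is already available: as observed above, $\mathcal{C}_{n,k,\sigma,I} \subseteq \mathcal{G}_{k+1,\sigma}$ and the subspace
\[
\mathcal{W}_0 = \left\{ \sum_{i=1}^{k-1} \gamma_i X^{\sigma^i} : \gamma_i \in \mathbb{F}_{q^n} \right\} = \left\{ f \circ X^\sigma : f \in \mathcal{G}_{k-1,\sigma} \right\}
\]
lies in $\mathcal{C}_{n,k,\sigma,I}$ and is equivalent to $\mathcal{G}_{k-1,\sigma}$. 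So it suffices to prove that \emph{every} subspace $\mathcal{W} \subseteq \mathcal{C}_{n,k,\sigma,I}$ equivalent to $\mathcal{G}_{k-1,\sigma}$ equals $\mathcal{W}_0$. Since such a $\mathcal{W}$ is in particular a subspace of $\mathcal{G}_{k+1,\sigma}$ equivalent to $\mathcal{G}_{k-1,\sigma}$, Theorem \ref{Shee} (with ambient $\sigma$-degree $k+1$ and target $k-1$, so that the bound reads $\deg_\sigma f + \deg_\sigma g \leq 2$) yields $\mathcal{W} = \mathcal{G}^{(f,g)}_{k-1,\sigma}$ for invertible linearized polynomials $f,g$ with $f_0 = 1$.

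Next I would record how the two extremal coefficients of a generic element $f \circ \alpha \circ g$, with $\alpha = \sum_{i=0}^{k-2} a_i X^{\sigma^i}\in\mathcal{G}_{k-1,\sigma}$, depend on $\alpha$. Writing $d_f = \deg_\sigma f$, $d_g = \deg_\sigma g$ and $f \circ \alpha \circ g = \sum_{i=0}^{k} c_i X^{\sigma^i}$, a direct composition shows that the coefficient of $X$ is $c_0 = a_0 g_0$, while the coefficient $c_k$ of $X^{\sigma^k}$ vanishes identically unless $d_f + d_g = 2$, in which case $c_k = f_{d_f}\, g_{d_g}^{\sigma^{k-2+d_f}}\, a_{k-2}^{\sigma^{d_f}}$ with $f_{d_f}, g_{d_g}\neq 0$. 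The membership test in $\mathcal{C}_{n,k,\sigma,I}$ then reads: an element of $\mathcal{G}_{k+1,\sigma}$ can lie in $\mathcal{C}^{(1)}_{n,k,\sigma,I}$ only if $c_k = 0$ and $\mathrm{N}_{q^n/q}(c_0) \in I$, and can lie in $\mathcal{C}^{(2)}_{n,k,\sigma,I}$ only if $c_0 = 0$ and $\mathrm{N}_{q^n/q}(c_k) \notin (-1)^{n(k+1)} I$.

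I would then eliminate the cases. If $d_f + d_g = 2$, choose $a_{k-2}\neq 0$, so $c_k\neq 0$; the element cannot lie in $\mathcal{C}^{(1)}_{n,k,\sigma,I}$, hence must lie in $\mathcal{C}^{(2)}_{n,k,\sigma,I}$, forcing $c_0 = a_0 g_0 = 0$ for all $a_0$, i.e. $g_0 = 0$ (if instead $g_0\neq 0$, taking $a_0,a_{k-2}$ both nonzero makes $c_0\neq 0$ and $c_k\neq 0$, landing outside both $\mathcal{C}^{(1)}$ and $\mathcal{C}^{(2)}$). With $g_0=0$, as $a_{k-2}$ runs over $\mathbb{F}_{q^n}^*$ the value $c_k$ runs over all of $\mathbb{F}_{q^n}^*$, so $\mathrm{N}_{q^n/q}(c_k)$ runs over all of $\F_q^*$; requiring $\mathrm{N}_{q^n/q}(c_k)\notin(-1)^{n(k+1)}I$ throughout forces $(-1)^{n(k+1)}I\cap\F_q^*=\emptyset$, i.e. $I\subseteq\{0\}$, contradicting $I\notin\{\emptyset,\{0\}\}$. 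If instead $d_f + d_g \leq 1$, then $c_k\equiv 0$, so any element with $c_0\neq 0$ can only lie in $\mathcal{C}^{(1)}_{n,k,\sigma,I}$, giving $\mathrm{N}_{q^n/q}(c_0)\in I$; should $g_0\neq 0$, then $c_0=a_0g_0$ sweeps $\mathbb{F}_{q^n}^*$ and $\mathrm{N}_{q^n/q}(c_0)$ sweeps $\F_q^*$, forcing $\F_q^*\subseteq I$, hence $I\in\{\F_q^*,\F_q\}$, again excluded. Thus $g_0=0$, which forces $\deg_\sigma g = 1$ and $\deg_\sigma f = 0$, i.e. $f=X$ and $g=g_1X^\sigma$ with $g_1\neq 0$; a short computation then gives $\mathcal{G}^{(f,g)}_{k-1,\sigma}=\mathcal{W}_0$. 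This exhausts all configurations and shows $\mathcal{W}=\mathcal{W}_0$, proving uniqueness.

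The main obstacle I anticipate is organizing the case analysis cleanly around the single pair of invariants $(c_0,c_k)$ and verifying that in each surviving configuration the leading (respectively constant) coefficient genuinely sweeps out all of $\mathbb{F}_{q^n}^*$ as $\alpha$ varies, so that the norm sweeps out all of $\F_q^*$. The crux is the opposing nature of the two defining conditions—$\mathcal{C}^{(1)}$ pushing a full set of norms \emph{into} $I$ and $\mathcal{C}^{(2)}$ pushing a full set of norms \emph{out of} $(-1)^{n(k+1)}I$—which cannot coexist with $I$ being a proper, nonempty, non-$\{0\}$ set; precisely the hypothesis $I\notin\{\emptyset,\{0\},\F_q^*,\F_q\}$ is what closes every branch except the trivial one. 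A minor point to treat separately is the boundary value $k=n-1$, where the ambient $\mathcal{G}_{k+1,\sigma}$ is all of $\tilde{\mathcal{L}}_{n,q,\sigma}$ and the reduction must rely on the same $\sigma$-degree bound argued directly rather than on the literal statement of Theorem \ref{Shee}.
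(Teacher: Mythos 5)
Your proposal is correct and takes essentially the same route as the paper's proof: reduction via Theorem \ref{Shee}, applied inside the ambient code $\mathcal{G}_{k+1,\sigma}$, to $\mathcal{W}=\mathcal{G}^{(f,g)}_{k-1,\sigma}$ with $\deg_{\sigma}(f)+\deg_{\sigma}(g)\leq 2$, followed by the computation of the two extremal coefficients (of $X$ and of $X^{\sigma^{k}}$) and the surjectivity of the norm map to rule out every configuration except $f=X$, $g=g_1X^{\sigma}$, which yields the subspace $\left\{\alpha\circ X^{\sigma}:\alpha\in\mathcal{G}_{k-1,\sigma}\right\}$. The only differences are in your favour: you impose the membership conditions element by element on $(c_0,c_k)$, whereas the paper assumes, citing only disjointness, that $\mathcal{W}$ lies entirely in $\cC^{(1)}_{n,k,\sigma,I}$ or entirely in $\cC^{(2)}_{n,k,\sigma,I}$ (a dichotomy that is not automatic for a subspace contained in a union of disjoint sets), and your closing remark about $k=n-1$, where the ambient $\mathcal{G}_{k+1,\sigma}=\tilde{\mathcal{L}}_{n,q,\sigma}$ falls outside the literal hypothesis of Theorem \ref{Shee}, flags a boundary case that the paper applies the theorem to without comment.
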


\proof
We have already noted that $\mathcal{C}_{n,k,\sigma,I}$ is contained in $\mathcal{G}_{k+1,\sigma}$. Now, let $\mathcal{W}$ be a subspace of $\mathcal{C}_{n,k,\sigma,I}$ equivalent to $\mathcal{G}_{k-1,\sigma}$. By Theorem \ref{Shee},  there exist invertible $\sigma$-linearized polynomials $f,g$ such that
$$
\mathcal{W}=\mathcal{G}_{k-1,\sigma}^{(f,g)}=\left\{ f \circ \alpha \circ g : \alpha \in \mathcal{G}_{k-1,\sigma} \right\}
$$
 with $f_0 \ne 0$ and $\deg_{\sigma}(f)+\deg_{\sigma}(g) \leq 2$.

\noindent If $\deg_{\sigma}(f)=t$ then $\deg_{\sigma}(g)\leq 2-t$. Let $\alpha=\sum_{i=0}^{k-2}\alpha_i X^{\sigma^i}$ be an element of $\mathcal{G}_{k-1,\sigma}$. Then the coefficient of $X$ in the polynomial $f \circ \alpha \circ g$ is $f_0\alpha_0g_0$, while the coefficient of $X^{\sigma^k}$ is $f_t\alpha_{k-2}^{\sigma^t}g_{2-t}^{\sigma^{k+t-2}}$. Since $\cC^{(1)}_{n,k,\sigma,I}$ and $\cC^{(2)}_{n,k,\sigma,I}$  are disjoint, first suppose that $\mathcal{W} \subset \mathcal{C}^{(1)}_{n,k,\sigma,I}$. Then, $\mathrm{N}_{q^n/q}(f_0\alpha_0 g_0) \in I$ and $f_t\alpha_{k-2}^{\sigma^t}g_{2-t}^{\sigma^{k+t-2}}=0$ for any $\alpha_0, \alpha_{k-2} \in \mathbb{F}_{q^n}$. So, we distinguish three cases:

\begin{itemize}

\item [1.] Suppose $t=2$, then $\deg_{\sigma}(g) = 0$. Since $f_0,f_2 \ne 0$ this is possible if and only if $0 \in I$ and $g_0=0$. This yields $g$ is the zero polynomial, a contradiction.

\item [2.] Suppose $t=1$, then $\deg_{\sigma}(g) \leq 1$. Since $f_0,f_1 \ne 0$ this is possible if and only if $0 \in I$ and $g_1=0$. If $g_0\not= 0$, since the map $x \in \F_{q^n}\mapsto \mathrm{N}_{q^n/q}(f_0xg_0) \in \F_{q}$ is surjective, $I=\F_q$ a contradiction. This yields $g$ is the zero polynomial, leading to a contradiction again.

\item [3.] Suppose $t=0$, then $\deg_{\sigma}(g) \leq 2$. Since $f_0 \ne 0$ this is possible if and only if $0 \in I$ and $g_{2}=0$ and $g_0=0$ as in the previous case. It follows $g=g_1X^{\sigma}$, implying $\mathcal{W}=\left\{ \alpha \circ X^{\sigma}: \alpha \in \mathcal{G}_{k-1,\sigma} \right\}$.

\end{itemize}

\noindent A similar argument can be applied when $\mathcal{W} \subset \mathcal{C}^{(2)}_{n,k,s,I}$. The claim follows.
\endproof

\begin{thm}\label{notequivalent}
Let $I \not \in \{\emptyset,\{0\},\F_q^*,\F_{q}\}$ and $1 \in T \subset \mathbb{F}_q^*$. Then the codes of type $\cC_{n,k,\sigma,I}$ and $\cC_{\sigma,T}$ are neither equivalent nor adjointly equivalent.
\end{thm}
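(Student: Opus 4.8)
The plan is to distinguish the two codes by comparing the collections of Gabidulin-type subspaces they contain, exploiting the asymmetry already established: $\cC_{n,k,\sigma,I}$ contains a \emph{unique} subspace equivalent to $\mathcal{G}_{k-1,\sigma}$ (Theorem \ref{unico}), whereas $\cC_{\sigma,T}$ visibly contains at least two distinct such subspaces, namely $\mathcal{U}$ and $\mathcal{V}$ from \eqref{gabidulin1} and \eqref{gabidulin2}. The first step is to make this count an invariant: since an equivalence $(f,\rho,g,h)$ sends any subspace of $\cC_{\sigma,T}$ equivalent to $\mathcal{G}_{k-1,\sigma}$ to a subspace of the image equivalent to $\mathcal{G}_{k-1,\sigma}$ (equivalence of codes preserves equivalence of contained subspaces, as the composite maps are again of the form $f'\circ\alpha\circ g'$), the number of such subspaces is invariant under equivalence, and likewise under adjoint equivalence because the adjoint of $\mathcal{G}_{r,\sigma}$ is again equivalent to a generalized Gabidulin code.

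Next I would verify that $\mathcal{U}$ and $\mathcal{V}$ are genuinely distinct as subspaces: they coincide only in the $\mathcal{G}_{k-2,\sigma}$-equivalent intersection noted in the excerpt, so as full $\mathcal{G}_{k-1,\sigma}$-copies they differ (one has leading term $X^{\sigma^{n-k+1}}$, the other $X$, relative to the common tail $\sum_{i=n-k+2}^{n-1}\beta_iX^{\sigma^i}$). Hence $\cC_{\sigma,T}$ contains at least two subspaces equivalent to $\mathcal{G}_{k-1,\sigma}$, while under the standing hypothesis $I\notin\{\emptyset,\{0\},\F_q^*,\F_q\}$ the code $\cC_{n,k,\sigma,I}$ contains exactly one. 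Since the two codes have differing values of an equivalence invariant, they cannot be equivalent; I would then rule out adjoint equivalence by the same count applied to $\cC_{n,k,\sigma,I}^t$, observing that transposition/adjunction carries generalized Gabidulin codes to generalized Gabidulin codes and the uniqueness argument of Theorem \ref{unico} applies verbatim to the adjoint code (whose $\cC^{(1)},\cC^{(2)}$ structure is preserved up to relabelling).

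The subtle point, and the one I expect to be the main obstacle, is justifying rigorously that the translation part $h$ in an equivalence does not interfere with the counting argument: an arbitrary equivalence allows $h\neq 0$, but a subspace $\mathcal{W}\subseteq\cC$ equivalent to $\mathcal{G}_{k-1,\sigma}$ is in particular an $\F_q$-subspace containing the zero map, so its image $\{f\circ\alpha^\rho\circ g+h:\alpha\in\mathcal{W}\}$ is a coset, and for it to again be a \emph{subspace} through the origin (as required for equivalence with $\mathcal{G}_{k-1,\sigma}$) one needs $h$ to lie in that image-space; I would argue that the relevant count is unaffected because one is comparing the number of $\mathcal{G}_{k-1,\sigma}$-copies inside each code, and the affine shift $h$ induces a bijection between such copies in $\cC_{\sigma,T}$ and such copies in $\cC_{n,k,\sigma,I}$. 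Care must also be taken that Theorem \ref{unico} is stated for $\cC_{n,k,\sigma,I}$ and I must confirm its proof transfers to the adjoint; this is where I would spend the most effort, checking that the norm conditions defining $\cC^{(1)}$ and $\cC^{(2)}$ retain their form under the adjoint map $\alpha\mapsto\hat\alpha$ so that the three-case analysis on $\deg_\sigma(f)+\deg_\sigma(g)\le 2$ goes through unchanged.
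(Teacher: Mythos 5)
Your proposal is correct and takes essentially the same approach as the paper: the paper's proof likewise transports the two distinct subspaces $\mathcal{U}$ and $\mathcal{V}$ of $\cC_{\sigma,T}$ through the assumed (adjoint) equivalence and derives a contradiction with the uniqueness statement of Theorem \ref{unico}. The only cosmetic difference is in the adjoint case, where the paper maps $\mathcal{U}^{t},\mathcal{V}^{t}$ (which are again equivalent to $\mathcal{G}_{k-1,\sigma}$ by \cite[Lemma 1]{Shee}) into $\cC_{n,k,\sigma,I}$ rather than transferring Theorem \ref{unico} to the adjoint code as you suggest; both routes work, and your explicit concern about the translation $h$ is a point the paper itself passes over silently.
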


\proof

By contradiction, assume that $\cC_{n,k,\sigma,I}$ and $\cC_{\sigma,T}$ are equivalent or adjointly equivalent. Then, there exist $f,g,h\in \tilde{\mathcal{L}}_{n,q,\sigma}$ with $f,g$ invertible and $\rho \in \Aut(\Fq)$ such that
$$
\cC_{n,k,\sigma,I}=f \circ \cC_{\sigma,T}^{\rho} \circ g+h
\,\,\textnormal{ or }\,\,
\cC_{n,k,\sigma,I}=f \circ (\cC_{\sigma,T}^{t})^{\rho}\circ g+h.
$$
Let  $\mathcal{U,V}$ be the subspaces in $\cC_{\sigma,T}$ equivalent to $\mathcal{G}_{k-1,\sigma}$ as defined in \eqref{gabidulin1} and \eqref{gabidulin2}.  If $\cC_{n,k,\sigma,I}$ and $\cC_{\sigma,T}$ are equivalent, then $f \circ \mathcal{U}^{\rho}\circ g+h$ and $f \circ \mathcal{V}^{\rho} \circ g+h$ are two different subspaces of $\cC_{n,k,\sigma,I}$ equivalent to $\mathcal{G}_{k-1,\sigma}$. By Theorem \ref{unico}, it follows a contradiction.\\
Next, suppose that $\cC_{n,k,\sigma,I}=f \circ (\cC_{\sigma,T}^{t})^{\rho}\circ g +h$. Then $f \circ (\mathcal{U}^{t})^{\rho}\circ  g+h$ and $f \circ (\mathcal{V}^{t})^{\rho} \circ g+h$ are two different subspaces of $\cC_{n,k,\sigma,I}$ equivalent to $\mathcal{G}_{k-1,\sigma}$, see \cite[Lemma 1]{Shee}, a contradiction again.
\endproof

Actually, Theorem \ref{thm:equivalence} and Theorem \ref{notequivalent} can be stated in a slightly more general form, where the automorphisms of $\mathrm{Gal}(\F_{q^n} | \F_q)$ defining the two classes of non-linear MRD codes are not necessarily equal. Indeed, let $\sigma$ and $\tau$ be generators of the group $\mathrm{Gal}(\F_{q^n}|\F_q)$. Recalling that two generalized Gabidulin codes $\mathcal{G}_{k,\sigma}$ and $\mathcal{G}_{k,\tau}$ are equivalent if and only if $\tau   \in \{ \sigma, \sigma^{-1}\}$ (see \cite[Theorem 4.4]{LuTr}), we have the following result
\begin{thm}
Let $\sigma$ and $\tau$ be generators of  the group $\mathrm{Gal}(\F_{q^n}|\F_q)$.
If $q=2$ or $T=\F_q^*$ and $I \in \{\emptyset,\{0 \},\F_q^*,\F_q\}$, then the codes $\cC_{\sigma,T}$ and $\cC_{n,k,\tau,I}$ are  equivalent if and only if $\tau \in \{\sigma, \sigma^{-1}\}$. Otherwise, they are neither equivalent nor adjointly equivalent.
\end{thm}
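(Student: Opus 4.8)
The plan is to treat the two regimes of the statement separately and, within the ``otherwise'' regime, to subdivide according to how many of the two codes are genuinely non-linear. In the first regime---$q=2$ or $T=\F_q^*$, together with $I\in\{\emptyset,\{0\},\F_q^*,\F_q\}$---both codes collapse to generalized Gabidulin codes of their respective generators: $\cC_{\sigma,T}$ is equivalent to $\mathcal{G}_{k,\sigma}$, as recorded just before Theorem \ref{thm:equivalence}, while $\cC_{n,k,\tau,I}$ is equivalent to $\mathcal{G}_{k,\tau}$ by \cite[Corollary 2.1]{Otal}. Since equivalence of codes is transitive, the two codes are equivalent if and only if $\mathcal{G}_{k,\sigma}$ and $\mathcal{G}_{k,\tau}$ are, and by \cite[Theorem 4.4]{LuTr} this occurs exactly when $\tau\in\{\sigma,\sigma^{-1}\}$. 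This settles the first regime.

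In the ``otherwise'' regime one has $q>2$ together with $T\neq\F_q^*$ or $I\notin\{\emptyset,\{0\},\F_q^*,\F_q\}$. I would first dispose of the sub-cases in which exactly one of the codes is linear, using the affine property as an invariant: if $\cC'=P\cC^{\rho}Q+R$ with $\cC=\cC_0+v$ and $\cC_0$ additive, then $\cC'=P\cC_0^{\rho}Q+(Pv^{\rho}Q+R)$ is again affine, and transposition preserves additivity, so being a coset of an additive code is preserved under both equivalence and adjoint equivalence. When $T=\F_q^*$ but $I\notin\{\emptyset,\{0\},\F_q^*,\F_q\}$, the code $\cC_{\sigma,T}$ is affine (Gabidulin) whereas $\cC_{n,k,\tau,I}$ is not, by \cite[Corollary 2.1]{Otal}; when $T\neq\F_q^*$ but $I\in\{\emptyset,\{0\},\F_q^*,\F_q\}$ the roles are exchanged, since $\cC_{\sigma,T}$ is not a translate of an additive code (Section \ref{sec6}). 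In both sub-cases the codes are therefore neither equivalent nor adjointly equivalent.

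There remains the principal case $q>2$, $T\neq\F_q^*$ and $I\notin\{\emptyset,\{0\},\F_q^*,\F_q\}$, in which both codes are genuinely non-linear; here I would run the argument of Theorem \ref{notequivalent}, with Theorems \ref{unico} and \ref{notequivalent} promoted to arbitrary generators. The guiding principle is that an equivalence---and, via \cite[Lemma 1]{Shee}, an adjoint equivalence---carries a subspace of $\cC_{\sigma,T}$ to a subset of $\cC_{n,k,\tau,I}$ in the same equivalence class, inducing a bijection between the subspaces of the two codes equivalent to any fixed Gabidulin code, exactly as in the proof of Theorem \ref{notequivalent}. Since $\cC_{\sigma,T}$ contains the two distinct subspaces $\mathcal{U}$ and $\mathcal{V}$ of \eqref{gabidulin1} and \eqref{gabidulin2}, both equivalent to $\mathcal{G}_{k-1,\sigma}$ (while their adjoints $\mathcal{U}^t,\mathcal{V}^t$ are distinct and equivalent to $\mathcal{G}_{k-1,\sigma^{-1}}$), an equivalence (respectively an adjoint equivalence) would force $\cC_{n,k,\tau,I}$ to contain at least two subspaces equivalent to $\mathcal{G}_{k-1,\mu}$, where $\mu=\sigma$ (respectively $\mu=\sigma^{-1}$). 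I would contradict this through a generalized uniqueness statement: for every generator $\mu$ of $\mathrm{Gal}(\F_{q^n}|\F_q)$, the code $\cC_{n,k,\tau,I}\subseteq\mathcal{G}_{k+1,\tau}$ contains at most one subspace equivalent to $\mathcal{G}_{k-1,\mu}$.

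This generalized uniqueness splits into two parts. If $\mu\in\{\tau,\tau^{-1}\}$, then $\mathcal{G}_{k-1,\mu}$ is equivalent to $\mathcal{G}_{k-1,\tau}$ by \cite[Theorem 4.4]{LuTr}, and the statement is Theorem \ref{unico} with $\sigma$ replaced by $\tau$; its proof---using the characterization of Gabidulin subspaces in Theorem \ref{Shee} and tracking the coefficients of $X$ and of $X^{\tau^{k}}$---carries over unchanged. The main obstacle is the complementary case $\mu\notin\{\tau,\tau^{-1}\}$, where I must show that $\mathcal{G}_{k+1,\tau}$ contains \emph{no} subspace equivalent to $\mathcal{G}_{k-1,\mu}$; equivalently, that the generator of a Gabidulin subcode of $\mathcal{G}_{k+1,\tau}$ is determined up to inversion. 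This is precisely the point at which Theorem \ref{Shee}, stated for a single generator, must be extended to distinct generators, and I expect to obtain it by comparing the left and right idealisers of such a subspace with the bimodule structure induced from $\mathcal{G}_{k+1,\tau}$, the generator being read off the twisting of the scalar action up to inverse. With this lemma in hand, the count ``at least two against at most one'' is contradictory for all choices of $\sigma$ and $\tau$, proving that in the ``otherwise'' regime the codes $\cC_{\sigma,T}$ and $\cC_{n,k,\tau,I}$ are neither equivalent nor adjointly equivalent.
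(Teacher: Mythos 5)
Your treatment of the first regime is exactly the paper's: both codes collapse to $\mathcal{G}_{k,\sigma}$ and $\mathcal{G}_{k,\tau}$ respectively, and \cite[Theorem 4.4]{LuTr} decides equivalence; this is fine. Your disposal of the mixed sub-cases (one code affine, the other not) via invariance of the affine property under equivalence and adjoint equivalence is also sound, and it makes explicit something the paper leaves implicit. Likewise, when both codes are non-linear and $\tau\in\{\sigma,\sigma^{-1}\}$, your reduction to Theorems \ref{unico} and \ref{notequivalent} works, since a subspace equivalent to $\mathcal{G}_{k-1,\sigma^{-1}}$ is equivalent to $\mathcal{G}_{k-1,\sigma}$.

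The gap is in the only genuinely new case, $q>2$, $T\neq\F_q^*$, $I\notin\{\emptyset,\{0\},\F_q^*,\F_q\}$ and $\tau\notin\{\sigma,\sigma^{-1}\}$: the lemma on which you hang this case --- that $\mathcal{G}_{k+1,\tau}$ contains \emph{no} subspace equivalent to $\mathcal{G}_{k-1,\mu}$ when $\mu\notin\{\tau,\tau^{-1}\}$, i.e.\ that the generator of a Gabidulin subcode is determined up to inversion --- is false. Take $k=3$, $n\geq 5$ odd, and $\mu=\tau^2$, which is a generator of $\mathrm{Gal}(\F_{q^n}|\F_q)$ distinct from $\tau$ and $\tau^{-1}$; then
\begin{equation*}
\mathcal{G}_{2,\tau^2}=\left\{\alpha_0 X+\alpha_1 X^{\tau^2}:\alpha_0,\alpha_1\in\F_{q^n}\right\}\subseteq \mathcal{G}_{4,\tau}=\mathcal{G}_{k+1,\tau},
\end{equation*}
so $\mathcal{G}_{k+1,\tau}$ contains a subspace literally equal to $\mathcal{G}_{k-1,\mu}$ (similarly $\mathcal{G}_{3,\tau^2}\subseteq\mathcal{G}_{5,\tau}$ for $k=4$). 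Consequently no argument via idealisers or bimodule structure can prove your lemma --- it has counterexamples --- and that part of your proposal was in any case only a sketch (``I expect to obtain it\ldots''), not a proof. What your counting argument actually requires is a uniqueness statement for the code $\cC_{n,k,\tau,I}=\cC^{(1)}_{n,k,\tau,I}\cup\cC^{(2)}_{n,k,\tau,I}$ itself, not for its Gabidulin envelope: namely that it cannot contain two distinct subsets each equivalent to $\mathcal{G}_{k-1,\sigma}$ with $\sigma\notin\{\tau,\tau^{-1}\}$. Proving this needs a cross-generator substitute for Theorem \ref{Shee}, whose hypothesis and conclusion involve the same automorphism on both sides, so it cannot be invoked as a black box; this is where a genuinely new idea is required. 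To be fair, the paper itself offers no more than the assertion that Theorems \ref{thm:equivalence} and \ref{notequivalent} ``can be stated in a slightly more general form'', so it glosses over the same difficulty; but your proposal, by committing to an explicit intermediate lemma, commits to a false one, and the ``at least two against at most one'' contradiction collapses precisely in the case the theorem adds to what was already proved.
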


\begin{remark}
\textnormal{For linear codes, left and right idealisers are invariants under equivalence \cite[Proposition 4.1]{kernelsnuclei} and they are subalgebras of $\tilde{\mathcal{L}}_{n,q,\sigma}$  isomorphic to subfields of $\F_{q^n}$ in the case of square linear MRD codes, see \cite[Theorem 3.1]{LongobardiZanella} and \cite[Theorem 5.4]{kernelsnuclei}. This, as already stated in \cite{kernelsnuclei}, does not hold for non-linear codes. In the following, we shall show an example of two equivalent codes such that their left idealisers have different size.\\
Let $\mathbb{F}_5$ be the field with five elements and let}
\begin{equation}
\mathcal{C}_1=\biggl \{
\begin{pmatrix}
    1 & 2 \\
    3 & 4
\end{pmatrix},
\begin{pmatrix}
  3 & 4\\
  3 & 4
\end{pmatrix} 
\biggr \} \subset \mathbb{F}^{2 \times 2}_5.
\end{equation}
\textnormal{Consider the matrix 
$H=\begin{pmatrix} 
    0 & 0 \\
    2 & 1
\end{pmatrix}$ and the code }
\begin{equation}
    \mathcal{C}_2=\mathcal{C}_1+ H=\biggl \{
\begin{pmatrix}
    1 & 2 \\
    0 & 0
\end{pmatrix},
\begin{pmatrix}
  3 & 1\\
  0 & 0
\end{pmatrix} 
\biggr \}.
\end{equation}
 \textnormal{It is straightforward to see that  $I_L(\mathcal{C}_1)$ is not a matrix field because of the unique element belonging to it is identity matrix, whereas any matrix $I_2$ of the shape}
 \begin{equation*}
\begin{pmatrix}
    1 & a \\
    0  & b
\end{pmatrix}
\end{equation*}
\textnormal{with $a,b \in \mathbb{F}_5$ is an element of $I_L(\mathcal{C}_2)$}.

\end{remark}

\section*{Acknowledgement}
We are grateful to the Italian National Group for Algebraic and Geometric Structures and their Applications (GNSAGA - INdAM) which have supported this research.

\vspace{1cm}
\noindent Dipartimento di Matematica e Applicazioni “Renato Caccioppoli”\\
Università degli Studi di Napoli Federico II,\\
Via Cintia, Monte S. Angelo I-80126 Napoli, Italy. \\
Email addresses: \\
\texttt{\{ndurante, giovannigiuseppe.grimaldi, giovanni.longobardi\}@unina.it}

\end{document}